\documentclass[aps, prx, reprint, amsmath,amssymb,showpacs,floatfix,longbibliography, twocolumn, superscriptaddress,nofootinbib]{revtex4-2}
\usepackage{times} 
\usepackage{graphicx}
\usepackage{CJKutf8}
\usepackage{dcolumn}
\usepackage{bm}
\usepackage{color}
\usepackage{xcolor}
\usepackage{hyperref}
\usepackage{enumitem}
\usepackage{cancel}
\usepackage{amsfonts}
\usepackage{amsthm}
\usepackage{bbm}
\usepackage{xr-hyper}
\usepackage[caption=false]{subfig}
\usepackage{stmaryrd}
\usepackage{yhmath}
\usepackage{makecell}

\usepackage{comment}
\hypersetup{colorlinks=true,citecolor=blue,linkcolor=blue, urlcolor=blue}
\hypersetup{linktocpage}
\newcolumntype{M}[1]{>{\centering\arraybackslash}m{#1}}
\newcolumntype{N}{@{}m{0pt}@{}}
\usepackage{environ}

\bibpunct{[}{]}{,}{n}{}{}

\usepackage{amsfonts,amssymb,amsmath}
\usepackage[T1]{fontenc}
\usepackage{tikz}
\newcommand{\bra}[1]{\langle {#1} |}
\newcommand{\ket}[1]{ | {#1} \rangle}
\newcommand{\braket}[2]{ \langle {#1} | {#2} \rangle}
\let\originalleft\left
\let\originalright\right
\renewcommand{\left}{\mathopen{}\mathclose\bgroup\originalleft}
\renewcommand{\right}{\aftergroup\egroup\originalright}

\newcommand{\T}{\mathrm{T}}

\newcommand{\tr}{\mathrm{tr}}

\newcounter{example}

\newtheorem{theorem}{Theorem}

\NewEnviron{eqs}{%
\begin{equation}\begin{split}
    \BODY
\end{split}\end{equation}
}

\begin{document}
\begin{CJK*}{UTF8}{gbsn}

\title{Approximate quantum error correction theory of non-isometric codes}

\author{Yixu Wang}
\email{wangyixu@simis.cn}
\affiliation{Shanghai Institute for Mathematics and Interdisciplinary Sciences (SIMIS), Shanghai 200433, China}

\author{Yijia Xu (许逸葭)}
\email{yijia@terpmail.umd.edu}
\affiliation{Joint Center for Quantum Information and Computer Science, University of Maryland, College Park,
Maryland 20742, USA}

\author{Zi-Wen Liu}
\email{zwliu0@tsinghua.edu.cn}
\affiliation{Yau Mathematical Sciences Center, Tsinghua University, Beijing, 100084, China }

\date{\today}

\begin{abstract}
Non-isometric encoding arises in various important contexts in quantum error correction, most notably in the finite-energy, non-ideal codewords inevitable in experimental realizations of continuous-variable codes, and holographic quantum gravity. In this work, we present a general and systematic theory of non-isometric quantum error-correcting codes. 
In particular, we employ the approximate quantum error correction framework to quantitatively study the fundamental limitations imposed by non-isometric encodings on the accuracy of quantum error correction and implementation of logical operations.
 We apply our theory to analyze GKP and tiger codes under energy constraints, and discuss the implications to holography.
\end{abstract}

\maketitle
\end{CJK*}

\section{Introduction}
Quantum error correction (QEC) provides the foundational framework for fault-tolerant quantum computation, with QEC codes protecting logical information from physical errors by encoding it into a larger physical Hilbert space. In the standard formulation, the encoding is modeled as an isometric quantum channel, so that the inner product between any two logical states is preserved by the corresponding codewords.

With the rapid development of QEC, non-isometric codes have attracted significant interest due to their natural relevance across practical, physical, and mathematical contexts. 
A ubiquitous source of non-isometric encoding is the non-orthogonality of physical codewords, which can easily arise from practical limitations such as energy constraints, imperfect state preparation, and post-selection. This issue is particularly fundamental to continuous-variable QEC codes.  Taking the well-known cat codes as a simple example~\cite{mirrahimi2014dynamically}, finite-energy constraints lead to coherent state codewords $\ket{\pm\alpha}$ for the orthogonal logical states $\ket{\overline{\pm}}$, which cannot be exactly orthogonal.
Similar non-orthogonality occurs broadly in continuous-variable codes, including Gottesman--Kitaev--Preskill (GKP) codes~\cite{gottesman2000encoding} and tiger codes~\cite{xu2025letting}.  Since the ideal codewords cannot be realized exactly in practice, non-isometric encoding becomes almost inevitable in realistic implementations.
However, the vast majority of existing studies on QEC and fault tolerance fault tolerance assume, by default, perfect isometric encoders and mutually orthogonal codewords.

Beyond practical motivations, a particularly representative fundamental setting where non-isometric encoding plays an important role is quantum gravity, especially in the context of the AdS/CFT correspondence.
The basic reason is that the bulk effective field theory can have more states than can be faithfully represented in the fundamental boundary theory.
Consequently, the bulk-to-boundary map must be non-isometric because it has a nontrivial kernel. Such non-isometric holographic maps have been studied in a variety of settings, including black hole interiors~\cite{kim2020ghost,akers2024black,kar2023non,dewolfe2023non}, de Sitter spacetime~\cite{cao2025overlapping}, cosmology~\cite{antonini2023cosmology}, and closed universes~\cite{akers2025observers}. They become asymptotically isometric in suitable large-$N$ limits~\cite{faulkner2022asymptotically}. The relation between non-isometric encoding and state-dependent reconstruction of bulk operators has also been explored~\cite{papadodimas2014state,akers2024black,antonini2025non}.

These considerations call for a systematic theory for understanding and characterizing non-isometric QEC codes, which remains to be developed. In this work, we address this by establishing an information-theoretic framework for quantitatively analyzing the capabilities and limitations of non-isometric encodings. We cast such encodings as a natural form of approximate quantum error correction, which enables us to characterize how non-isometry limits both central properties of QEC codes---logical information recovery and logical operation implementation. With the general theory in place, we apply it to representative examples from energy-constrained continuous-variable quantum error-correcting codes and holographic duality, illustrating its broad implications across both practical quantum technologies and fundamental physics.

\section{Characterizing non-isometric encodings}
In this work, we describe the encoding by a map 
$V: \mathcal{H}_L \rightarrow \mathcal{H}_P, V=\sum_{i=0}^{d-1} \ket{i_p}\bra{\overline i}$ which maps each logical basis state $\ket{\overline{i}}$ to the corresponding codeword $\ket{i_p}$ in the physical Hilbert space.
Here, $\{\ket{\overline{i}}\}$ form a set of orthonormal basis of the finite $d$-dimensional logical Hilbert space $\mathcal{H}_L=\mathbb{C}^d$, and ${\ket{i_p}}$ denotes the corresponding codewords in the physical Hilbert space $\mathcal{H}_P$. Unlike in standard quantum coding settings, we do not assume that the codewords ${\ket{i_p}}$ are normalized or mutually orthogonal. Therefore, the Gram operator $V^{\dagger}V$ is a positive semi-definite Hermitian operator which may have unequal diagonal entries and nonzero off-diagonal entries. If $\{\ket{i_p}\}$ is an orthonormal set, then $V$ is an isometry and $V^\dagger V=\mathbbm{1}_L$. Throughout this paper, we denote the eigenvalues of $V^\dagger V$ by $\{\lambda_i\}$ and the corresponding normalized eigenvectors by $\{\ket{v_i}\}$. Then
\begin{equation}\label{eq:V}
    V^\dagger V=\sum_{i,j=0}^{d-1} \ket{\overline i}\braket{i_p}{j_p}\bra{\overline j}=\sum_{k=0}^{d-1}\lambda_k\ket{v_k}\bra{v_k}.
\end{equation}

Acting on quantum states, $V(\cdot)V^\dagger$ does not preserve the trace of density matrices in general. To obtain  physical states with proper normalization, for inputs satisfying $\tr(V\rho V^\dagger)>0$ we define the non-isometric encoding as a map 
\begin{equation}\label{eq:nonisoencoding}
    \tilde{\mathcal{E}}(\rho)=\frac{V\rho V^\dagger}{\tr(V \rho V^\dagger)}.
\end{equation}
In general, $\tilde{\mathcal{E}}$ is not a quantum channel because it is not linear. Nevertheless, whenever it is well defined, its output is a valid density matrix. This map arises naturally as a post-selected state of a POVM process. From this definition, we see that rescaling $V$ by a constant does not affect the encoding map. Therefore, to compare and contrast with the isometric case, we can set $\tr(V^\dagger V)=d$ when needed.

To rigorously characterize the quantitative behavior of non-isometric codes, we use the following information-theoretic measures. For two quantum channels $\mathcal{N}$ and $\mathcal{M}$ and a fixed state $\rho$, define
 \begin{eqs}
     F_{\rho}(\mathcal{N},\mathcal{M})\equiv F(\mathcal{N}\otimes \mathrm{id}(\ket{\psi_\rho}\bra{\psi_\rho}),\mathcal{M}\otimes \mathrm{id}(\ket{\psi_\rho}\bra{\psi_\rho})).
 \end{eqs}
where $\ket{\psi_{\rho}}$ is an arbitrary pure dilation of the density matrix $\rho$, and $F$ is the fidelity between two states,
$F(\rho,\sigma)=\tr(\sqrt{\sqrt{\rho}\sigma\sqrt{\rho}})$. 
The channel fidelity is the worst-case fidelity
\begin{equation}
F(\mathcal{N},\mathcal{M})\equiv\min_{\rho} F_{\rho}(\mathcal{N},\mathcal{M}).
\end{equation}
Correspondingly, we define the purified distance
\begin{eqs}
    \epsilon_{\rho}(\mathcal{N},\mathcal{M})\equiv\sqrt{1-F_{\rho}^2(\mathcal{N},\mathcal{M})}.
\end{eqs}
It satisfies the triangle inequality, namely, for any state $\rho$ and quantum channels $\mathcal{N}$, $\mathcal{M}$ and $\mathcal{K}$, 
\begin{equation}\label{eq:epsilontriangleineq}
    \epsilon_{\rho}(\mathcal{N},\mathcal{M})\leq \epsilon_{\rho}(\mathcal{N},\mathcal{K})+\epsilon_{\rho}(\mathcal{K},\mathcal{M}).
\end{equation}
The above definitions and properties extend to completely positive maps as long as for each map $\mathcal{N}$, the corresponding $\mathcal{N}\otimes \mathrm{id}(\ket{\psi_\rho}\bra{\psi_\rho})$ is a valid density matrix. In particular, they apply to $\tilde{\mathcal{E}}$.

We characterize the deviation of $V$ from an isometry through the optimal recovery fidelity of the encoding map $\tilde{\mathcal{E}}$ with no extra noise channel. We define the worst-case and the average-case optimal fidelities as well as the corresponding purified distances as
\begin{eqs}\label{eq:Fmin}
F_{\min}&\equiv\max_{\mathcal{R}}F(\mathcal{R}\circ\tilde{\mathcal{E}},\mathrm{id}),~\epsilon_{\min}\equiv\sqrt{1-F_{\min}^2},
\end{eqs}
\begin{eqs}\label{eq:Favg}
F_{\mathrm{avg}}^2&\equiv\max_{\mathcal{R}}\int_{\mathrm{Haar}} d\psi F_{\ket{\psi}\bra{\psi}}^2(\mathcal{R}\circ\tilde{\mathcal{E}},\mathrm{id}),~\epsilon_{\mathrm{avg}}\equiv\sqrt{1-F^{2}_{\mathrm{avg}}},
\end{eqs}
where $\mathrm{id}$ is the logical identity channel.
In view of the close relation between the Choi state $\ket{\psi_c}\equiv\sum_{i=0}^{d-1}\frac{1}{\sqrt{d}}\ket{\overline{i}}\ket{\overline{i}}$ and the Haar average over pure states, we also define 
\begin{eqs}\label{eq:FChoi}
F_{\mathrm{Choi}}\equiv\max_{\mathcal{R}}F_{\ket{\psi_c}\bra{\psi_c}}(\mathcal{R}\circ\tilde{\mathcal{E}},\mathrm{id}),~
\epsilon_{\mathrm{Choi}}\equiv\sqrt{1-F^{2}_{\mathrm{Choi}}}.
\end{eqs}

A spectral parameter capturing the deviation from isometry is the condition ratio of $V^\dagger V$:
\begin{equation}
r\equiv\frac{\lambda_{\mathrm{min}}}{\lambda_{\mathrm{max}}},~0\leq r\leq 1,
\end{equation}
where $\lambda_{\mathrm{min}}$ and $\lambda_{\mathrm{max}}$ are the minimum and maximum eigenvalues, respectively.
The parameter $r$ takes its maximal value $r=1$ if and only if the encoding is isometric. At the opposite extreme, $r=0$ means $V$ has a nontrivial kernel, so some nonzero logical vector is mapped to zero. This occurs in particular when $\mathrm{dim}(\mathcal{H}_P) < \mathrm{dim}(\mathcal{H}_L)$.

\section{Optimal recovery accuracy}

\begin{figure*}[t]
    \centering
    \includegraphics[width=1.0\textwidth]{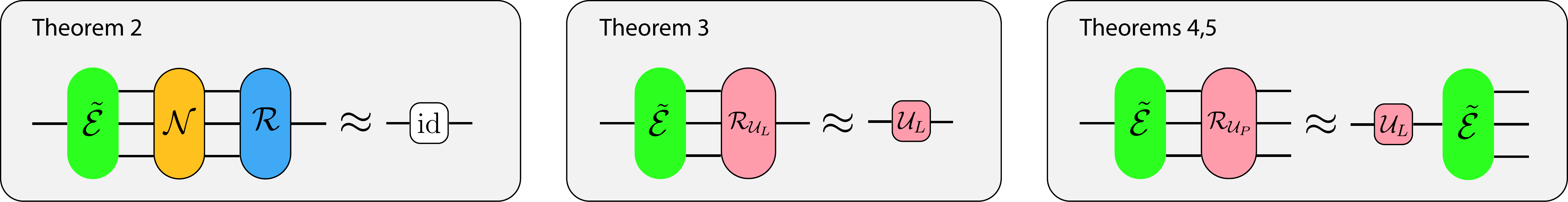}
    \caption{Schematic overview of the approximation relations for non-isometric codes discussed in Theorems~\ref{prop:optimalFnoise}, \ref{prop:logical1}, \ref{prop:logical2}, \ref{prop:bestworstUL}. 
    }
    \label{fig:theoremillustration}
\end{figure*}

We now establish the quantitative consequences of non-isometric quantum encoding. We begin our analysis by deriving the intrinsic optimal recovery fidelity and error quantities of non-isometric encodings without extra noise. We derive explicit formulas for these quantities.
\begin{theorem} \label{prop:optimalF}
    Let $\tilde{\mathcal{E}}$ be the encoding map  $\tilde{\mathcal{E}}(\rho)=\frac{V\rho V^\dagger}{\tr(V \rho V^\dagger)}$ where $V=\sum_{i=0}^{d-1} \ket{i_p}\bra{\overline i}$, and let the eigenvalues and eigenvectors of $V^\dagger V$ be $\{\lambda_i\}$ and $\{\ket{v_i}\}$, respectively. Then
    \begin{align}
&F_{\mathrm{min}}=\frac{2\lambda_{\max}^{\frac 14}\lambda_{\min}^{\frac 14}}{\lambda_{\max}^{\frac 12}+\lambda_{\min}^{\frac 12}}=\frac{2r^{\frac{1}{4}}}{1+r^{\frac{1}{2}}},\\    &F_{\mathrm{avg}}^2=\sum_{i,j=0}^{d-1} \sqrt{\lambda_i\lambda_j} \frac{\partial}{\partial \lambda_i}\frac{\partial}{\partial \lambda_j}\left( \frac{1}{d} \sum_{n=0}^{d-1} \frac{\lambda_n^d \log\lambda_n}{\prod_{m\neq n}(\lambda_n -\lambda_m)} \right),\\
        &F_{\mathrm{Choi}}=\frac{1}{\sqrt{d}}\frac{\sum_{i=0}^{d-1}\lambda^{\frac{1}{2}}_i}{(\sum_{i=0}^{d-1}\lambda_i)^{\frac{1}{2}}}=\frac{1}{\sqrt{d}}\frac{\|V\|_1}{\|V\|_2}.
    \end{align}
\end{theorem}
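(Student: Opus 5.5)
The strategy is to reduce everything to a canonical form of $V$ and then treat the three quantities one at a time. Write the polar decomposition $V=U\sqrt{V^\dagger V}$, where $U$ is an isometry on the support of $V$. Since an isometry can be undone by a recovery channel, every optimal-recovery fidelity of $\tilde{\mathcal{E}}$ equals that of the map $\rho\mapsto A\rho A/\tr(A^2\rho)$ with $A=\sqrt{V^\dagger V}=\sum_k\sqrt{\lambda_k}\ket{v_k}\bra{v_k}$. Conversely, composing with $U^\dagger(\cdot)U$ plus a completion shows nothing is lost. So we may take $V=A$, a positive operator diagonal in the basis $\{\ket{v_k}\}$.

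\emph{Choi fidelity.} For a pure input $\ket{\psi}$ with purification $\ket{\psi}_{LR}$, the encoded state is the pure state $(A\otimes 1)\ket{\psi}/\|A\ket\psi\|$. For any recovery $\mathcal{R}$ with Kraus operators $R_a$, the squared fidelity is
\begin{equation}
F_\psi^2=\sum_a\frac{|\bra{\psi}(R_aA\otimes 1)\ket{\psi}|^2}{\bra\psi A^2\ket\psi}.
\end{equation}
For the Choi state this becomes $\sum_a |\tr(R_aA)|^2/(d\,\tr A^2)$. By Cauchy--Schwarz with $\sum_a R_a^\dagger R_a\le 1$, the quantity $\sum_a|\tr(R_aA)|^2$ is at most $(\tr A)^2$: write $\tr(R_aA)=\tr(R_a\sqrt A\sqrt A)$ and use the variational bound $\max_{\|\,\cdot\,\|_\infty\le1}|\tr(XA)|=\|A\|_1$ (the map $X\mapsto(\tr R_aX)_a$ is a contraction). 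The bound is attained by the identity recovery. This gives $F_{\rm Choi}=\tr A/\sqrt{d\,\tr A^2}$, which is the stated formula.

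\emph{Worst case.} By the same argument with the identity recovery, a pure input $\rho$ gives $F=\tr(\rho A)/\sqrt{\tr(\rho A^2)}$; mixed inputs reduce to the Choi-type computation with weight $\sqrt\rho$. I would first establish a minimax: the optimal recovery is the identity, which follows because $A$ commutes with the symmetries fixing the spectral structure. This minimax is the step I expect to be the main obstacle, and I would try to justify it through the convexity of $F^2$ in $\mathcal{R}$ together with Sion's theorem over the inputs. The worst case then minimizes $\langle A\rangle/\sqrt{\langle A^2\rangle}$ over probability distributions $p_k$ on the spectrum. This is a Kantorovich-type problem, and its minimum is attained on the two extreme eigenvalues. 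With $x=\sqrt{\lambda_{\max}}$, $y=\sqrt{\lambda_{\min}}$ and weights $p,1-p$, optimizing $(px+(1-p)y)^2/(px^2+(1-p)y^2)$ gives $p=y/(x+y)$. The minimal value is $4xy/(x+y)^2$, so $F_{\min}=2r^{1/4}/(1+r^{1/2})$.

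\emph{Average.} Again with the identity recovery (justified by unitary covariance plus the same convexity argument), the Haar average of $F_\psi^2$ is
\begin{equation}
\int d\psi\,\frac{\bra\psi A\ket\psi^2}{\bra\psi A^2\ket\psi}.
\end{equation}
To evaluate it, write $1/\bra\psi A^2\ket\psi=\int_0^\infty e^{-t\bra\psi A^2\ket\psi}dt$ and realize Haar $\psi$ as a normalized complex Gaussian vector. The numerator $\bra\psi A\ket\psi^2=\sum_{ij}\sqrt{\lambda_i\lambda_j}|\psi_i|^2|\psi_j|^2$ is then generated by derivatives in the $\lambda_i$ of the generating function $\int d\psi\,\log\bra\psi A^2\ket\psi$-type expressions; the chain rule through $\lambda=(\sqrt\lambda)^2$ produces the $\sqrt{\lambda_i\lambda_j}$ prefactors. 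The Haar average of $\log\sum_n\lambda_n|\psi_n|^2$ over the simplex is the divided-difference formula $\frac1d\sum_n\lambda_n^d\log\lambda_n/\prod_{m\ne n}(\lambda_n-\lambda_m)$, up to terms that are annihilated by the derivatives. I would verify that formula by partial fractions on the Dirichlet integral.
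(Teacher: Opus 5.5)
Your Choi computation is the paper's, and your worst-case value is right (see the last paragraph). The genuine gap is the average case. You never show that the identity recovery is optimal there, and \emph{unitary covariance plus the same convexity argument} cannot do it. The objective is linear in $\mathcal R$ and symmetric only under unitaries commuting with $A$, so averaging over that group only restricts you to covariant recoveries. For nondegenerate $A$ these include, e.g., measuring in the basis $\{\ket{v_k}\}$ and relabeling outcomes by a permutation $\pi$. Its value $\sum_i\lambda_iO_{i\pi(i)}$ must be compared with the identity's by an actual inequality. Here $p_i=|\braket{v_i}{\psi}|^2$ and $O_{ij}=\int d\psi\,p_ip_j/\sum_m\lambda_mp_m$. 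For Kraus operators with entries $R_{a,ij}=\bra{v_i}R_a\ket{v_j}$, the averaged objective is
\begin{equation*}
\sum_a\Big[\sum_{i,j}\sqrt{\lambda_i\lambda_j}\,O_{ij}\,R_{a,ii}\overline{R_{a,jj}}+\sum_{i\neq j}\lambda_j\,O_{ij}\,|R_{a,ij}|^2\Big].
\end{equation*}
The missing idea is the paper's lemma $O_{ij}\le O_{jj}$ for $i\neq j$, proved by symmetrizing $p_i\leftrightarrow p_j$ in the Dirichlet integral. Together with the column constraint $\sum_a\sum_i|R_{a,ij}|^2=1$, it bounds the $i=j$ part of the first sum plus the second sum by $\sum_j\lambda_jO_{jj}$. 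Cauchy--Schwarz over $a$ (with $O_{ij}\ge 0$) bounds the remaining cross terms by $\sum_{i\neq j}\sqrt{\lambda_i\lambda_j}O_{ij}$. Both bounds are attained by the identity.

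Second, the generating function you name is wrong: $\partial_{\lambda_i}\partial_{\lambda_j}\int\log(\lambda\cdot p)=-\int p_ip_j/(\lambda\cdot p)^2\neq O_{ij}$. The correct one is $\Phi(\lambda)=\int(\lambda\cdot p)\log(\lambda\cdot p)$, since $\partial_i\partial_j\Phi=O_{ij}$. By Hermite--Genocchi, $\Phi$ equals $\frac1d\sum_n\lambda_n^d\log\lambda_n/\prod_{m\neq n}(\lambda_n-\lambda_m)$ up to a linear function of $\lambda$. The Haar average of $\log(\lambda\cdot p)$ is instead $\sum_n\lambda_n^{d-1}\log\lambda_n/\prod_{m\neq n}(\lambda_n-\lambda_m)$ up to a constant, so the identity you planned to verify is false. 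The factors $\sqrt{\lambda_i\lambda_j}$ do not come from a chain rule; they are the weights in $\bra{\psi}A\ket{\psi}^2=\sum_{ij}\sqrt{\lambda_i\lambda_j}\,p_ip_j$. Your Laplace-plus-Gaussian representation would work on its own: it expresses $O_{ij}$ as $\frac1d\int_0^\infty dt$ of an explicit rational function of $t$, and partial fractions produce the logarithms.

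For the worst case, your route differs from the paper's. The paper passes to complementary maps (Theorem S1, via Sion's theorem), reduces to $\min_\rho\tr\sqrt{\rho A^2\rho}/\sqrt{\tr(\rho A^2)}$, and lower-bounds the numerator by Araki--Lieb--Thirring (here just $\|A\rho\|_1\ge\tr(A\rho)$). Yours is more elementary, and the obstacle you flag is not real, though your symmetry remark would not resolve it (eigenbasis dephasing has the same symmetry). The identity recovery gives $F_\rho=\tr(\rho A)/\sqrt{\tr(\rho A^2)}$ for every $\rho$, mixed or pure, because the encoded purification is $(A\otimes 1)\ket{\psi_\rho}$ renormalized; this is the lower bound. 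For the upper bound no minimax is needed. Let $\rho^*$ be diagonal on $\ket{v_{\max}},\ket{v_{\min}}$ with your optimal weights. Then $F_{\min}\le\max_{\mathcal R}F_{\rho^*}=\|A\rho^*\|_1/\sqrt{\tr(\rho^*A^2)}$ by the weighted version of your Choi-state Cauchy--Schwarz. Since $\rho^*$ commutes with $A$, $\|A\rho^*\|_1=\tr(A\rho^*)$, and the two bounds coincide.
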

The detailed derivation is presented in Appendix~\ref{sec:propcalculation}. These fidelities characterize the intrinsic error correction accuracy limits induced by non-isometric encoding.  In deriving these results, we formulate a general theorem extending the information-disturbance theorem in Ref.~\cite{beny2010general} to possibly non-isometric maps. We state and prove this fundamental result in Appendix~\ref{sec:theoremproof}.

Having characterized the intrinsic recovery limits, we now incorporate physical noise that can occur after the encoding. We model the noise by a quantum channel acting on the physical Hilbert space $\mathcal{N}:\mathcal{H}_P\to\mathcal{H}_P$. Analogous fidelity quantities and their corresponding purified distances can be defined: 
\begin{align}
    &F_{\min}^{(\mathcal{N})}\equiv \max_{\mathcal{R}}F(\mathcal{R}\circ\mathcal{N}\circ\tilde{\mathcal{E}},\mathrm{id}),~\epsilon_{\min}^{(\mathcal{N})}\equiv\sqrt{1-F_{\min}^{(\mathcal{N})2}},\\
    &F_{\mathrm{Choi}}^{(\mathcal{N})}\equiv\max_{\mathcal{R}}F_{\ket{\psi_c}\bra{\psi_c}}(\mathcal{R}\circ\mathcal{N}\circ\tilde{\mathcal{E}},\mathrm{id}),~\epsilon_{\mathrm{Choi}}^{(\mathcal{N})}\equiv\sqrt{1-F_{\mathrm{Choi}}^{(\mathcal{N})2}},\\
    &F_{\mathrm{avg}}^{(\mathcal{N})2}\equiv\max_{\mathcal{R}}\int_{\mathrm{Haar}} d\psi F_{\ket{\psi}\bra{\psi}}^2(\mathcal{R}\circ\mathcal{N}\circ\tilde{\mathcal{E}},\mathrm{id}),\\
    &\epsilon_{\mathrm{avg}}^{(\mathcal{N})}\equiv\sqrt{1-F_{\mathrm{avg}}^{(\mathcal{N})2}}.
\end{align}
 
When $\tilde{\mathcal{E}}$ is a quantum channel, it is known that $\epsilon_{\mathrm{Choi}}^{(\mathcal{N})}=\sqrt{\frac{d+1}{d}}\epsilon_{\mathrm{avg}}^{(\mathcal{N})}$~\cite{gilchrist2005distance, horodecki1999general,Kong2022NearOptimal}.  In the non-isometric setting of interest here, this equivalence relation no longer holds. Nevertheless, the Choi error remains an upper bound on the average error, $\epsilon_{\mathrm{Choi}}^{(\mathcal{N})}\geq\epsilon_{\mathrm{avg}}^{(\mathcal{N})}$, as proved in Appendix~\ref{sec:avgvschoi}.

The following theorem delineates the fundamental approximate error-correction properties of non-isometric encodings in the presence of noise, organized into three layers: intrinsic lower bounds, noise-dependent upper bounds, and optimality conditions. First, the noisy recovery error is always lower-bounded by the corresponding intrinsic error derived in Theorem~\ref{prop:optimalF}. Second, it sets an upper bound consisting of two contributions: the intrinsic non-isometric error and the additional error of correcting the physical noise relative to the encoded family. Finally, it identifies a sufficient condition under which the lower bound is saturated and 
 give its equivalent formulation in the form of Knill--Laflamme (KL) error correction condition \cite{kl1997}. Formally:
\begin{theorem}\label{prop:optimalFnoise}
Let $\tilde{\mathcal{E}}$ be the non-isometric encoding map
$\tilde{\mathcal{E}}(\rho)=\frac{V\rho V^\dagger}{\tr(V \rho V^\dagger)}$ where $V=\sum_{i=0}^{d-1} \ket{i_p}\bra{\overline i}$. Let $\mathcal{N}$ be a noise channel on the physical Hilbert space, described by $\mathcal{N}(\cdot)=\sum_j N_j(\cdot)N_j^\dagger$. Let $\mathcal R_{\mathcal N}$ be a quantum channel on $\mathcal{H}_P$. Then
\begin{enumerate}
\item[(a)] $\epsilon_{\mathrm{min}}\leq\epsilon_{\mathrm{min}}^{(\mathcal{N})} \leq \epsilon_{\mathrm{min}} + \min_{\mathcal{R}_{\mathcal{N}}}\epsilon_{\mathrm{min}}(\mathcal R_{\mathcal N}\circ\mathcal{N}\circ \tilde{\mathcal{E}},\tilde{\mathcal{E}});$

\item[(b)] $\epsilon_{\mathrm{min}}=\epsilon_{\mathrm{min}}^{(\mathcal{N})}$ if $\min_{\mathcal{R}_{\mathcal{N}}}\epsilon_{\mathrm{min}}(\mathcal R_{\mathcal N}\circ\mathcal{N}\circ \tilde{\mathcal{E}},\tilde{\mathcal{E}})=0;$

\item[(c)] $\min_{\mathcal{R}_{\mathcal{N}}}\epsilon_{\mathrm{min}}(\mathcal R_{\mathcal N}\circ\mathcal{N}\circ \tilde{\mathcal{E}},\tilde{\mathcal{E}})=0$ if and only if there exists a matrix $\Lambda=(\Lambda_{ij})$ such that $V^\dagger N^\dagger_i N_j V=\Lambda_{ij} V^\dagger V,~ \forall~i,j.$
\end{enumerate}
\noindent The above properties apply directly to the corresponding Choi and average quantities.
\end{theorem}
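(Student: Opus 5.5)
For part (a), the lower bound comes from data processing. Any recovery $\mathcal{R}$ used after the noise can be absorbed into a recovery for the noiseless encoding: set $\mathcal{R}'=\mathcal{R}\circ\mathcal{N}$. Then
\[
F(\mathcal{R}\circ\mathcal{N}\circ\tilde{\mathcal{E}},\mathrm{id})=F(\mathcal{R}'\circ\tilde{\mathcal{E}},\mathrm{id})\le F_{\min}.
\]
Maximizing over $\mathcal{R}$ gives $F_{\min}^{(\mathcal{N})}\le F_{\min}$, i.e. $\epsilon_{\min}\le\epsilon_{\min}^{(\mathcal{N})}$. This works verbatim for each fixed $\rho$, so it also covers the Choi and Haar-averaged quantities.

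For the upper bound, I will construct a candidate recovery. Let $\mathcal{R}_0$ be the optimal noiseless recovery from Theorem~\ref{prop:optimalF}, and let $\mathcal{R}_{\mathcal{N}}$ be any channel on $\mathcal{H}_P$. Use $\mathcal{R}=\mathcal{R}_0\circ\mathcal{R}_{\mathcal{N}}$. For each input $\rho$, the triangle inequality \eqref{eq:epsilontriangleineq} gives
\[
\epsilon_\rho(\mathcal{R}_0\circ\mathcal{R}_{\mathcal{N}}\circ\mathcal{N}\circ\tilde{\mathcal{E}},\mathrm{id})\le \epsilon_\rho(\mathcal{R}_0\circ\tilde{\mathcal{E}},\mathrm{id})+\epsilon_\rho(\mathcal{R}_0\circ\mathcal{R}_{\mathcal{N}}\circ\mathcal{N}\circ\tilde{\mathcal{E}},\mathcal{R}_0\circ\tilde{\mathcal{E}}).
\]
By monotonicity of fidelity under the channel $\mathcal{R}_0\otimes\mathrm{id}$, the second term is at most $\epsilon_\rho(\mathcal{R}_{\mathcal{N}}\circ\mathcal{N}\circ\tilde{\mathcal{E}},\tilde{\mathcal{E}})$. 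Taking the worst case over $\rho$ on both sides and then minimizing over $\mathcal{R}_{\mathcal{N}}$ gives the claim. The point to check is that the maximum of a sum is bounded by the sum of maxima, which holds. In the Choi case the state is fixed, so there is no issue. In the average case the error is $\sqrt{1-\int F^2}$, and the triangle inequality has to be applied to the root-mean-square errors. I would handle this by combining Minkowski's inequality with the pointwise triangle inequality. The nonlinearity of $\tilde{\mathcal{E}}$ causes no trouble: for each $\rho$ the map only produces the fixed state $\tilde{\mathcal{E}}\otimes\mathrm{id}(\psi_\rho)$, to which channels are then applied.

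Part (b) follows immediately from (a), since the upper and lower bounds coincide when the extra term vanishes.

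Part (c) carries the real content. I would reduce it to the standard Knill--Laflamme theorem for a fixed subspace. For "if", assume $V^\dagger N_i^\dagger N_j V=\Lambda_{ij}V^\dagger V$ and let $P$ be the projector onto the code space $\mathcal{C}=\mathrm{range}(V)$. Any $\ket{a},\ket{b}\in\mathcal{C}$ can be written as $V\ket{x},V\ket{y}$, so
\[
\bra{a}N_i^\dagger N_j\ket{b}=\Lambda_{ij}\braket{a}{b},
\]
which is exactly $PN_i^\dagger N_jP=\Lambda_{ij}P$. The standard KL theorem then provides a channel $\mathcal{R}_{\mathcal{N}}$ with $\mathcal{R}_{\mathcal{N}}\circ\mathcal{N}(\sigma)=\sigma$ for every $\sigma$ supported on $\mathcal{C}\otimes$ reference. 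Every output $\tilde{\mathcal{E}}\otimes\mathrm{id}(\psi_\rho)$ is such a state, so the error is zero. For "only if", a zero worst-case error means $\mathcal{R}_{\mathcal{N}}\circ\mathcal{N}$ fixes $\tilde{\mathcal{E}}\otimes\mathrm{id}(\psi_\rho)$ for all $\rho$. Taking $\rho$ maximally mixed on the logical space, the encoded purification has full support on $\mathcal{C}$; up to normalization it is $(V\otimes 1)\sum_i\ket{\bar i}\ket{\bar i}$, and $V$ maps onto $\mathcal{C}$. A channel that fixes a pure state whose reduced support is all of $\mathcal{C}$ acts as the identity on $\mathcal{C}$, so $\mathcal{N}$ is exactly correctable on $\mathcal{C}$. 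The necessity direction of KL then gives $PN_i^\dagger N_jP=\Lambda_{ij}P$, and sandwiching with $V^\dagger\cdot V$ (using $PV=V$) gives the stated condition.

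The main obstacle is this last step: justifying that perfect recovery on the nonlinear family $\tilde{\mathcal{E}}$ is equivalent to perfect recovery on the linear subspace $\mathcal{C}$. I would settle it through the maximally entangled input as above. The Choi and average versions of (c) follow in the same way, because both involve inputs of full support, and zero average error forces zero error almost everywhere and hence, by continuity, everywhere.
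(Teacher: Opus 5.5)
Your proof is correct; (a) and (b) follow the paper step for step, while (c) takes a genuinely different route.

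The paper derives (c) from its complementary-channel theorem (Theorem~\ref{theorem:fidelitycomplementary} with $(n_1,n_2)=(1,1)$, which rests on the quasiconvexity and Sion minimax argument). It requires that the complementary output of $\mathcal{N}\circ\tilde{\mathcal{E}}$ factorize, for every input, as the reference marginal tensored with a fixed environment state $\sigma_{0E}$.

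You instead observe that $V^\dagger N_i^\dagger N_jV=\Lambda_{ij}V^\dagger V$ is literally the ordinary KL condition $PN_i^\dagger N_jP=\Lambda_{ij}P$ for the projector onto $\mathrm{range}(V)$. You then show that exact recovery of the nonlinear family $\tilde{\mathcal{E}}$ is equivalent to exact recovery on that subspace, using the maximally entangled input. Your fixed-pure-state argument is right: each Kraus operator must satisfy $K_mP=c_mP$.

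Your route is more elementary and bypasses the minimax theorem. It also makes explicit the equivalence of the worst-case, Choi and average vanishing conditions, which the paper only asserts. And it shows that exact correctability is blind to the non-isometry, depending only on $\mathrm{range}(V)$. The paper's route buys uniformity, since everything flows from one information-disturbance theorem.

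Two smaller remarks:
\begin{enumerate}
\item Your Minkowski step in the average case of (a) is genuinely needed. Integrating the pointwise triangle inequality, as the paper does, only bounds the mean of $\epsilon_\psi$, not the root-mean-square quantity $\epsilon_{\mathrm{avg}}$.
\item In the average version of (c) the inputs are pure, not full-support. The closing step should therefore be that $\mathcal{R}_{\mathcal{N}}\circ\mathcal{N}$ fixes every pure state in $\mathcal{C}$, and hence, by linearity, every operator on $\mathcal{C}$.
\end{enumerate}
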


The setup of this theorem is illustrated in the left panel of Figure \ref{fig:theoremillustration}. The proof is given in Appendix~\ref{sec:optimalFineqproof}. 
Note that part (b) provides a sufficient condition for saturating the lower bound, but it is not necessary in general unless $\epsilon_{\mathrm{min}}=0$. Result (c) gives the non-isometric analogue of the KL condition. Indeed, when $V$ is an isometry, this condition reduces to the conventional KL condition~\cite{beny2010general,kl1997}.

\section{No exact logical unitary operator}
Another central consequence of non-isometric encoding concerns logical operations, a fundamental problem in quantum coding.
Beyond optimal recovery, we now demonstrate that the implementation of logical unitary operations on the physical system is also subject to fundamental accuracy limits, revealing another manifestation of the obstruction induced by non-isometry.  Note that the setup and notations of all theorems are illustrated in Figure \ref{fig:theoremillustration}.

In the ordinary isometric setting, let the encoding isometry be $V_0$, satisfying $V^\dagger_0V_0=\mathbbm{1}_L$ and $V_0V_0^\dagger=P_V$. A physical unitary $U_P$ represents a logical unitary $U_L$ if 
    $U_P V_0\ket{\psi}= V_0 U_L \ket{\psi},~\forall \ket{\psi}\in \mathcal{H}_L$.
For finite-dimensional logical Hilbert spaces, the following two formulations in terms of operator identities are equivalent, as shown in Appendix~\ref{sec:callogical}:
\begin{eqs}\label{eq:logicalequivalence}
   V_0^\dagger U_P V_0= U_L \quad \Leftrightarrow \quad U_P V_0= V_0 U_L.
\end{eqs} 
Crucially, although these two perspectives are equivalent for isometric encodings, they yield different characterizations of logical operators in the non-isometric case. The left-hand identity describes a recovery criterion: one asks whether a logical unitary channel can be realized by first applying the non-isometric encoding and then applying an appropriate recovery channel. Under this perspective, we show that logical unitaries generally cannot be implemented exactly, and further, the optimal implementation fidelity is exactly characterized by the code fidelity under different measures.
\begin{theorem}\label{prop:logical1}
    Let $\tilde{\mathcal{E}}$ be the non-isometric encoding map $\tilde{\mathcal{E}}(\rho)=\frac{V\rho V^\dagger}{\tr(V \rho V^\dagger)}$ where $V=\sum_{i=0}^{d-1} \ket{i_p}\bra{\overline i}$. Let $\mathcal{U}_L(\cdot)=U_L (\cdot)U_L^\dagger$ be a logical unitary channel.  Then there does not exist any quantum channel $\mathcal{R}_{\mathcal{U}_L}: \mathcal{H}_P \rightarrow \mathcal{H}_L$ 
    such that  $\mathcal{R}_{\mathcal{U}_L}\circ\tilde{\mathcal{E}}=\mathcal{U}_L$. Moreover, the optimal worst-case and average channel fidelities are given by $F_{\mathrm{min}}$, $F_{\mathrm{avg}}^2$ and  $F_{\mathrm{Choi}}$, respectively:
    \begin{align}
    &\max_{\mathcal{R}_{\mathcal{U}_L}}F(\mathcal{R}_{\mathcal{U}_L}\circ\tilde{\mathcal{E}},\mathcal{U}_L)=F_{\mathrm{min}},\\
    &\max_{\mathcal{R}_{\mathcal{U}_L}}F_{\ket{\psi_c}\bra{\psi_c}}(\mathcal{R}_{\mathcal{U}_L}\circ\tilde{\mathcal{E}},\mathcal{U}_L)=F_{\mathrm{Choi}},\\
    &\max_{\mathcal{R}_{\mathcal{U}_L}}\int_{\mathrm{Haar}} d\psi F_{\ket{\psi}\bra{\psi}}^2(\mathcal{R}_{\mathcal{U}_L}\circ\tilde{\mathcal{E}},\mathcal{U}_L)=F_{\mathrm{avg}}^2.    
    \end{align}
\end{theorem}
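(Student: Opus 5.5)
The plan is to reduce everything to Theorem~\ref{prop:optimalF} through unitary invariance of the fidelity quantities under post-processing by $\mathcal{U}_L^\dagger$. The key observation is that composing with a logical unitary channel is a bijection on the set of recovery channels $\mathcal{H}_P\to\mathcal{H}_L$. If $\mathcal{R}$ is any recovery, then $\mathcal{R}_{\mathcal{U}_L}:=\mathcal{U}_L\circ\mathcal{R}$ is also a channel, and conversely $\mathcal{R}:=\mathcal{U}_L^\dagger\circ\mathcal{R}_{\mathcal{U}_L}$. Moreover, for any state $\rho$ with purification $\ket{\psi_\rho}$ we have
\begin{equation}
F_\rho(\mathcal{U}_L\circ\mathcal{R}\circ\tilde{\mathcal{E}},\mathcal{U}_L)=F_\rho(\mathcal{R}\circ\tilde{\mathcal{E}},\mathrm{id}).
\end{equation}
This holds because both arguments are obtained from the corresponding states by applying the same unitary $U_L\otimes\mathbbm{1}$, and the fidelity is unitarily invariant. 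Here we use that $(\mathcal{R}\circ\tilde{\mathcal{E}})\otimes\mathrm{id}$ applied to $\ket{\psi_\rho}\bra{\psi_\rho}$ is a genuine density matrix (the normalization in $\tilde{\mathcal{E}}$ is fixed by the reduced state $\rho$). Post-composing with the linear channel $\mathcal{U}_L\otimes\mathrm{id}$ therefore commutes with that normalization.

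Given this identity pointwise in $\rho$, the three claimed equalities follow by taking the appropriate optimization on both sides. For the worst case, we minimize over $\rho$ and then maximize over $\mathcal{R}$; the bijection between $\mathcal{R}$ and $\mathcal{R}_{\mathcal{U}_L}$ gives exactly $F_{\min}$ as defined in Eq.~\eqref{eq:Fmin}. For the Choi case, we set $\rho$ to the reduced state of $\ket{\psi_c}$ and maximize, obtaining $F_{\mathrm{Choi}}$ of Eq.~\eqref{eq:FChoi}. For the average case, we square the identity, integrate over Haar-random $\ket{\psi}$ (the integrand identity holds for each $\psi$), and maximize, recovering $F_{\mathrm{avg}}^2$ of Eq.~\eqref{eq:Favg}. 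The explicit closed forms are then inherited from Theorem~\ref{prop:optimalF}.

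The nonexistence claim follows from the worst-case equality. An exact $\mathcal{R}_{\mathcal{U}_L}$ would give fidelity $1$, hence $F_{\min}=1$. By Theorem~\ref{prop:optimalF} this means $2r^{1/4}=1+r^{1/2}$, i.e.\ $(1-r^{1/4})^2=0$, so $r=1$ and $V^\dagger V\propto\mathbbm{1}$. Thus an exact implementation is impossible precisely in the genuinely non-isometric case $r<1$, which is the implicit hypothesis. (When $V$ is proportional to an isometry, the encoding reduces to the isometric one and exact recovery exists.) I would state this caveat explicitly when writing up.

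The main point needing care is the invariance step for the nonlinear map $\tilde{\mathcal{E}}$. Specifically, one must check that the definition of $F_\rho$ for $\tilde{\mathcal{E}}$ (extended to CP maps in the paper) is applied to $(\tilde{\mathcal{E}}\otimes\mathrm{id})(\ket{\psi_\rho}\bra{\psi_\rho})$ normalized by $\tr(V\rho V^\dagger)$. The unitary $\mathcal{U}_L$ then acts after this normalization, so no $\rho$-dependent rescaling is introduced. Everything else is a routine relabeling of the optimization variable.
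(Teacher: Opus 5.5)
Your proposal is correct, but it argues differently from the paper. The paper invokes its information-disturbance theorem (Theorem~\ref{theorem:fidelitycomplementary} with $(n_1,n_2)=(1,0)$ and $\mathcal{M}_1=\mathcal{U}_L$). Since the complementary channels of $\mathcal{U}_L$ and $\mathrm{id}_L$ are the same replacement channel, the optimal recovery fidelities toward $\mathcal{U}_L$ and toward $\mathrm{id}$ coincide for the worst-case and Choi figures of merit. Only for the Haar average does the paper appeal to unitary invariance of the fidelity, which is essentially your argument.

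You instead use unitary invariance for all three cases. Your ingredients are the bijection $\mathcal{R}\leftrightarrow\mathcal{U}_L\circ\mathcal{R}$ and the pointwise identity $F_\rho(\mathcal{U}_L\circ\mathcal{R}\circ\tilde{\mathcal{E}},\mathcal{U}_L)=F_\rho(\mathcal{R}\circ\tilde{\mathcal{E}},\mathrm{id})$. This route is more elementary. The identity holds for every $\rho$ and every $\mathcal{R}$, so the $\max_{\mathcal{R}}\min_\rho$ structure carries over without the quasiconvexity and Sion minimax exchange on which Theorem~\ref{theorem:fidelitycomplementary} rests. It also transfers verbatim to any figure of merit built from $F_\rho$. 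The paper's route instead places the result in the complementary-channel framework used throughout: the environment learns the same thing (nothing) from $\mathcal{U}_L$ as from $\mathrm{id}$.

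Your explicit handling of the nonexistence claim is an improvement. The paper states it unconditionally and calls it immediate, but it actually requires $r<1$, via $F_{\min}=2r^{1/4}/(1+r^{1/2})<1$. When $V^\dagger V\propto\mathbbm{1}$ an exact $\mathcal{R}_{\mathcal{U}_L}$ exists, so your caveat is warranted.
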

\begin{proof}
We use Theorem~\ref{theorem:fidelitycomplementary} for the case $(n_1,n_2)=(1,0)$, $\mathcal{N}_1=\mathrm{id}_L$, $\mathcal{N}_2=\mathrm{id}_P$, $\mathcal{M}_1=\mathcal{U}_L$, $\mathcal{M}_2=\mathrm{id}_L$. Because the complementary channel for both the unitary channel and the identity channel is the replacement channel $(\mathcal{U}_L)_c(\rho)=(\mathrm{id}_L)_c(\rho)=\tr(\rho)\ket{0}\bra{0}$, the result is immediate for the worst-case and Choi fidelities. The average fidelity result follows similarly from the invariance of fidelity under unitary conjugation.
\end{proof}

The right-hand identity in \eqref{eq:logicalequivalence} expresses the usual physical implementation or covariance perspective: applying a physical unitary after encoding is equivalent to applying the logical unitary before encoding.
This conventional interpretation is adopted in previous work analyzing logical unitary operators in the context of non-isometric encoding \cite{antonini2025non}. 
For non-isometric encodings, the achievable fidelity under this criterion depends nontrivially on the chosen logical unitary. The following theorem gives an explicit characterization.
\begin{theorem}\label{prop:logical2}
Let $\tilde{\mathcal{E}}:\mathcal{H}_L\to\mathcal{H}_P$ be the non-isometric encoding map $\tilde{\mathcal{E}}(\rho)=\frac{V\rho V^\dagger}{\tr(V \rho V^\dagger)}$ where $V=\sum_{i=0}^{d-1} \ket{i_p}\bra{\overline i}$. 
Let $\mathcal{U}_L(\cdot)=U_L (\cdot)U_L^\dagger$ be a logical unitary channel.
Then for any state $\rho\in\mathcal{H}_L$, we have
 \begin{align}\label{eq:fidelityUL}
&\max_{\mathcal{R}_{\mathcal{U}_P}}F_{\rho}(\mathcal{R}_{\mathcal{U}_P}\circ\tilde{\mathcal{E}},\tilde{\mathcal{E}}\circ\mathcal{U}_L)\notag\\=&\frac{\tr\left[\sqrt{\sqrt{\rho^{1/2}V^\dagger V \rho^{1/2}} \rho^{1/2}U_L^\dagger V^\dagger V U_L\rho^{1/2}\sqrt{\rho^{1/2}V^\dagger V \rho^{1/2}}}\right]}{\sqrt{\tr(\rho V^\dagger V)\tr(\rho U_L^\dagger V^\dagger V U_L)}}. 
 \end{align}
Here, the maximization is taken over all physical channels  $\mathcal{R}_{\mathcal{U}_P}$. The maximization can be achieved by a unitary channel $\mathcal{R}_{\mathcal{U}_P}^*(\cdot)=U_P(\cdot)U_P^\dagger$. Here $U_P$ is block-diagonal. On the subspace given by the support of $VV^\dagger$, it is equivalent to the partial isometry in the polar decomposition of $V U_L \rho V^\dagger$, while on the orthogonal complement, it can be chosen arbitrarily.
\end{theorem}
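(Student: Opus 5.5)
Let $\ket{\psi_\rho}=(\rho^{1/2}\otimes\mathbbm{1})\sum_i\ket{\overline i}\ket{\overline i}$ be the canonical purification. The two states to compare are then pure (up to normalization) on $\mathcal{H}_P\otimes\mathcal{H}_R$:
\begin{equation*}
\ket{A}=\frac{(V\otimes\mathbbm{1})\ket{\psi_\rho}}{\sqrt{\tr(\rho V^\dagger V)}},\qquad
\ket{B}=\frac{(VU_L\otimes\mathbbm{1})\ket{\psi_\rho}}{\sqrt{\tr(\rho U_L^\dagger V^\dagger VU_L)}} .
\end{equation*}
Since the target $\ket{B}\bra{B}$ is pure, the fidelity reduces to $F_\rho=\sqrt{\bra{B}(\mathcal{R}\otimes\mathrm{id})(\ket{A}\bra{A})\ket{B}}$. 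The plan is to bound this over all channels $\mathcal{R}$ acting on $P$ alone, using Uhlmann's theorem, and then show that a unitary attains the bound.

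For the upper bound, I use monotonicity of fidelity under partial trace over $P$:
\begin{equation*}
F\big((\mathcal{R}\otimes\mathrm{id})(\ket{A}\bra{A}),\ket{B}\bra{B}\big)\le F(\sigma_A,\sigma_B),
\end{equation*}
where $\sigma_A,\sigma_B$ are the reduced states on the reference $R$; they do not depend on $\mathcal{R}$. Conversely, Uhlmann's theorem says $F(\sigma_A,\sigma_B)=\max_W|\bra{B}(W\otimes\mathbbm{1})\ket{A}|$ over partial isometries (or unitaries) $W$ on $P$. This requires $\dim\mathcal{H}_P\ge d$, or more precisely that both purifications live in $\mathcal{H}_P$. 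That holds because the supports of the $P$-marginals lie in $\mathrm{supp}(VV^\dagger)$. Hence the maximum is exactly $F(\sigma_A,\sigma_B)$, and a unitary $U_P$ achieves it.

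Next I compute the reduced states. With $\ket{\psi_\rho}=\sum_i(\rho^{1/2}\ket{\overline i})\ket{\overline i}$, tracing out $P$ gives $\sigma_A\propto(\rho^{1/2}V^\dagger V\rho^{1/2})^{T}$ and $\sigma_B\propto(\rho^{1/2}U_L^\dagger V^\dagger VU_L\rho^{1/2})^{T}$. Fidelity is invariant under transposition, since transposition preserves the spectrum of $\sqrt{\sigma_A}\sigma_B\sqrt{\sigma_A}$. Substituting into $F=\tr\sqrt{\sqrt{\sigma_A}\sigma_B\sqrt{\sigma_A}}$ and pulling out the normalizations gives exactly \eqref{eq:fidelityUL}.

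For the optimizer, I write out the Uhlmann overlap:
\begin{equation*}
\bra{B}(W\otimes\mathbbm{1})\ket{A}\propto\tr\big(\rho^{1/2}U_L^\dagger V^\dagger WV\rho^{1/2}\big)=\tr\big(W\,V\rho V^\dagger\cdots\big),
\end{equation*}
which more precisely equals $\tr(W X)$ with $X=V\rho U_L^\dagger V^\dagger$ (or its adjoint, depending on convention). The maximum of $|\tr(WX)|$ over contractions $W$ is $\|X\|_1$. It is attained by $W=Q^\dagger$, where $X=Q|X|$ is the polar decomposition. This $W$ is a partial isometry acting within $\mathrm{supp}(VV^\dagger)$, so it can be extended arbitrarily to a unitary on the orthogonal complement, giving the stated block-diagonal $U_P$.

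I should also check that $\|X\|_1$ matches the trace expression above, which serves as a consistency check on the formula. Write $X=(V\rho^{1/2})(VU_L\rho^{1/2})^\dagger$. Its trace norm equals the fidelity of the Gram-type matrices $M^\dagger M$ and $N^\dagger N$, where $M=V\rho^{1/2}$ and $N=VU_L\rho^{1/2}$, by the standard identity $\|MN^\dagger\|_1=F(M^\dagger M,N^\dagger N)$. Aligning this with the polar decomposition of $VU_L\rho V^\dagger$ quoted in the statement is, I expect, the main bookkeeping obstacle. That is a matter of ordering $U_L$ versus $U_L^\dagger$ and of the choice of purification. I will fix conventions by noting that the fidelity is independent of the purification chosen, and by checking the case $V$ isometric, where $U_P$ should restrict to $V U_L V^\dagger$ on the code space.
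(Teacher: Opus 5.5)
Your proof is correct, and it takes a more direct route than the paper's.

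\textbf{The paper's route.} The paper applies its generalized information--disturbance theorem (Theorem~\ref{theorem:fidelitycomplementary} with $(n_1,n_2)=(1,1)$). This trades the optimization over recoveries $\mathcal{R}$ for one over channels $\mathcal{R}'$ acting on complementary outputs. Because $\tilde{\mathcal{E}}$ has the single Kraus operator $V$, those outputs are the normalized reference marginals tensored with $\ket{0}\bra{0}_E$, so the optimal $\mathcal{R}'$ is the identity and the formula is the fidelity of the marginals. Unitary achievability is then a separate step. The paper computes $\max_{U_P}|\tr(U_P^\dagger VU_L\rho V^\dagger)|=\|VU_L\rho V^\dagger\|_1$, and then uses an SVD to show $\tr\sqrt{ABA^\dagger}=\tr\sqrt{|A|B|A|}$ with $A=V\rho^{1/2}$.

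\textbf{Your route.} You get the upper bound from monotonicity of fidelity under $\tr_P$, using $\tr_P\circ(\mathcal{R}\otimes\mathrm{id})=\tr_P$. The matching lower bound comes from Uhlmann's theorem, which directly produces a unitary on $\mathcal{H}_P$. Your identity $\|MN^\dagger\|_1=F(M^\dagger M,N^\dagger N)$ is exactly the content of the paper's SVD comparison. In your argument it is only a consistency check, because optimality of the unitary is built in from the start.

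\textbf{Comparison.} Your version is self-contained and avoids Theorem~\ref{theorem:fidelitycomplementary}. At fixed $\rho$ that theorem is itself an Uhlmann argument, so nothing is lost. The paper's version has the advantage of fitting the single complementary-channel framework it also uses for $F_{\min}$, $F_{\mathrm{Choi}}$ and the KL-type condition.

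\textbf{Minor points.}
\begin{enumerate}
\item No condition $\dim\mathcal{H}_P\ge d$ is needed, and such a condition would fail in the holographic example. Uhlmann in the form $\max_U|\bra{B}(U\otimes\mathbbm{1})\ket{A}|=F(\sigma_A,\sigma_B)$ holds for any two vectors in $\mathcal{H}_P\otimes\mathcal{H}_R$, as your own trace-norm computation shows.
\item The bookkeeping you anticipated resolves immediately. Set $O\equiv VU_L\rho V^\dagger$; then $X=V\rho U_L^\dagger V^\dagger=O^\dagger$. If $O=Q_O|O|$, then $X=Q_O^\dagger|X|$, so your optimal $W=Q^\dagger$ equals $Q_O$, exactly as the statement says.
\item When $O$ is rank-deficient inside $\mathrm{supp}(VV^\dagger)$, $Q_O$ must be completed to a unitary within $\mathrm{supp}(VV^\dagger)$ as well, not only on its orthogonal complement. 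Any unitary $U$ that agrees with $Q_O$ on $\mathrm{supp}\,|O|$ satisfies $U|O|=O$, so $\tr(U^\dagger O)=\|O\|_1$ and the maximum is attained.
\end{enumerate}
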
 

 The proof is given in Appendix~\ref{sec:callogical2}. 
Evidently, for any pure state $\ket{\psi}\in\mathcal{H}_L$, we have $\max_{\mathcal{R}_{\mathcal{U}_P}}F_{\ket{\psi}\bra{\psi}}(\mathcal{R}_{\mathcal{U}_P}\circ\tilde{\mathcal{E}},\tilde{\mathcal{E}}\circ\mathcal{U}_L)=1$. Indeed, if we only require exact implementation $U_P V\ket{\psi}=V U_L\ket{\psi}$ for a single state $\ket{\psi}$, one can always construct such a $U_P$. This is related to the state-dependent reconstruction in the context of holographic duality~\cite{papadodimas2014state,akers2024black,kar2023non,antonini2025non}. 

If we take $\rho=\rho_{\mathrm{Choi}}=\frac{\mathbbm{1}_L}{d}$ in Eq.~\eqref{eq:fidelityUL}, we obtain the quantity $\max_{\mathcal{R}_{\mathcal{U}_P}}F_{\mathrm{Choi}}(\mathcal{R}_{\mathcal{U}_P}\circ\tilde{\mathcal{E}},\tilde{\mathcal{E}}\circ\mathcal{U}_L)$, 
which characterizes, for a certain logical unitary $U_L$, the extent to which it can be represented by a single physical channel on $\mathcal {H}_P$ in a state-independent manner:
\begin{align}\label{eq:FChoilogical}
&\max_{\mathcal{R}_{\mathcal{U}_P}}F_{\mathrm{Choi}}(\mathcal{R}_{\mathcal{U}_P}\circ\tilde{\mathcal{E}},\tilde{\mathcal{E}}\circ\mathcal{U}_L)\nonumber\\&\qquad=\frac{\tr\left[\sqrt{\sqrt{V^\dagger V } U_L^\dagger V^\dagger V U_L\sqrt{V^\dagger V }}\right]}{\tr(V^\dagger V)}.
\end{align} 

With the above understanding that non-isometry affects the implementation accuracy of different logical unitaries non-uniformly, it is natural to ask which logical unitaries are more strongly or more weakly obstructed. The following theorem gives a complete explicit characterization of the best (exact implementation) and worst cases in terms of the spectral properties of $V^\dagger V$.

\begin{theorem}\label{prop:bestworstUL}
Let $\tilde{\mathcal{E}}:\mathcal{H}_L\to\mathcal{H}_P$ be the non-isometric encoding $\tilde{\mathcal{E}}(\rho)=\frac{V\rho V^\dagger}{\tr(V \rho V^\dagger)}$, where $V=\sum_{i=0}^{d-1} \ket{i_p}\bra{\overline i}$. For a logical unitary $\mathcal{U}_L(\cdot)=U_L (\cdot)U_L^\dagger$, consider the physical implementation Choi fidelity $\max_{\mathcal{R}_{\mathcal{U}_P}}F_{\mathrm{Choi}}(\mathcal{R}_{\mathcal{U}_P}\circ\tilde{\mathcal{E}},\tilde{\mathcal{E}}\circ\mathcal{U}_L)$, where the maximization is taken over arbitrary physical quantum channels.

On the one hand,
\begin{align}
\max_{U_L}\max_{\mathcal{R}_{\mathcal{U}_P}}F_{\mathrm{Choi}}(\mathcal{R}_{\mathcal{U}_P}\circ\tilde{\mathcal{E}},\tilde{\mathcal{E}}\circ\mathcal{U}_L)=1.
\end{align}
The unitaries $U_{L,\mathrm{max}}$ that achieve the maximum are precisely those that commute with $V^\dagger V$. Equivalently, if $\ket{v_j}$ is a nondegenerate eigenvector of $V^\dagger V$, then
\begin{align}
    U_{L,\mathrm{max}}\ket{v_j}=e^{i\theta_j}\ket{v_j},
\end{align}
where $e^{i\theta_j}$ is an arbitrary phase. Within each degenerate eigenspace of $V^\dagger V$, $U_{L,\mathrm{max}}$ can be an arbitrary unitary.

On the other hand,
\begin{align}
&\min_{U_L}\max_{\mathcal{R}_{\mathcal{U}_P}}F_{\mathrm{Choi}}(\mathcal{R}_{\mathcal{U}_P}\circ\tilde{\mathcal{E}},\tilde{\mathcal{E}}\circ\mathcal{U}_L)=\frac{\sum_{i=0}^{d-1} \sqrt{\lambda_{(i)} \lambda_{(d-1-i)}}}{\sum_{i=0}^{d-1}\lambda_{i}},
\end{align}
where $\lambda_{(0)}\geq\lambda_{(1)}\geq\cdots\geq\lambda_{(d-1)}$ is an ordered sequence of the eigenvalues of $V^\dagger V$. A unitary $U_{L,\mathrm{min}}$ achieving the minimal fidelity can be be written as
\begin{align}
U_{L,\mathrm{min}}\ket{v_{(j)}}=e^{i\theta_j}\ket{v_{(d-1-j)}},
\end{align}
where $\ket{v_{(j)}}$ is an eigenvector associated with $\lambda_{(j)}$ and $e^{i\theta_j}$ is an arbitrary phase.
\end{theorem}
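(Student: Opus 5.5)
Starting from Eq.~\eqref{eq:FChoilogical}, which Theorem~\ref{prop:logical2} gives for $\rho=\mathbbm{1}_L/d$, I write $A\equiv V^\dagger V$ and $B\equiv U_L^\dagger A U_L$. Then the Choi implementation fidelity is
\begin{equation}
f(U_L)=\frac{\tr\sqrt{A^{1/2}BA^{1/2}}}{\tr A}=\frac{\|A^{1/2}B^{1/2}\|_1}{\tr A}=\frac{\|A^{1/2}U_L^\dagger A^{1/2}\|_1}{\tr A},
\end{equation}
where I use $B^{1/2}=U_L^\dagger A^{1/2}U_L$ and the unitary invariance of the trace norm. Equivalently, $f(U_L)\,\tr A$ is the classical fidelity-type quantity $F(A,B)$ between two positive operators with the same trace $\tr A=\tr B$. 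The whole problem therefore reduces to optimizing $\|A^{1/2}WA^{1/2}\|_1$ over unitaries $W=U_L^\dagger$.

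\textbf{Maximum.} The bound $f\le 1$ follows from Hölder's inequality: $\|A^{1/2}WA^{1/2}\|_1\le\|A^{1/2}\|_2\|WA^{1/2}\|_2=\tr A$. One could also cite the fidelity bound $F(A,B)\le\sqrt{\tr A\,\tr B}$. The value $1$ is attained by $W=\mathbbm{1}$.

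For the characterization of equality, I use the equality case of Hölder's inequality. Let $WA^{1/2}=|X|\,\cdot$ polar. Equality requires that $(A^{1/2})^\dagger$ and $WA^{1/2}$ be aligned, i.e. $WA^{1/2}=S A^{1/2}$ with $S$ a partial isometry acting as the phase. A cleaner route is the equality condition of the fidelity: $F(A,B)=\tr A=\tr B$ iff $A=B$. I would prove this via the purified-distance/fidelity property that normalized states have fidelity $1$ iff they are equal, applied to $A/\tr A$ and $B/\tr B$. Then $A=U_L^\dagger AU_L$, i.e. $[U_L,A]=0$. The stated eigenvector description (phases on nondegenerate eigenvectors, arbitrary unitaries inside degenerate eigenspaces) is just the block form of unitaries commuting with $A$.

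\textbf{Minimum.} This is the main obstacle. I would use the variational formula $\|X\|_1=\max_{Y\ \mathrm{unitary}}|\tr(YX)|$.

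The lower bound comes from choosing a convenient $Y$. Taking $Y=\mathbbm{1}$ gives only $|\tr(AW)|$, which can be small. Instead I use that the fidelity of two positive operators dominates the fidelity of their spectra (von Neumann/Uhlmann-type). More precisely, $F(A,B)\ge\min_{\pi}\sum_i\sqrt{\lambda_i\mu_{\pi(i)}}$, where $\mu$ is the spectrum of $B$, equal to that of $A$. This can be obtained by applying the pinching (measurement in the eigenbasis of $A$, or the data-processing inequality for fidelity under the measurement channel in $B$'s eigenbasis). That yields the classical fidelity $\sum_i\sqrt{p_i q_i}$ between the distributions obtained by measuring both operators. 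Honestly, data processing gives $F(A,B)\le$ classical fidelity, which is the wrong direction, so I would instead use the known result that $\min_{U}F(\rho,U\sigma U^\dagger)=\sum_i\sqrt{\lambda^\downarrow_i(\rho)\lambda^\uparrow_i(\sigma)}$. To prove it I would write $\|A^{1/2}WA^{1/2}\|_1\ge\sum_i s_i(A^{1/2})\,s_{d-1-i}(WA^{1/2})$. This is the lower bound on singular values of a product: $\sum_i s_i(XY)\ge\sum_i s_i(X)s_{d-1-i}(Y)$, which follows from the multiplicative Horn/Gelfand–Naimark majorization $\log s(XY)\succ\log s(X)+\log s^\uparrow(Y)$ together with convexity of $\exp$. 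Since $s(WA^{1/2})=s(A^{1/2})=\sqrt{\lambda}$, this gives exactly $\sum_i\sqrt{\lambda_{(i)}\lambda_{(d-1-i)}}$.

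Achievability is a direct computation. Taking $U_L\ket{v_{(j)}}=e^{i\theta_j}\ket{v_{(d-1-j)}}$, the operator $A^{1/2}U_L^\dagger A^{1/2}$ maps $\ket{v_{(d-1-j)}}$ to $\sqrt{\lambda_{(j)}\lambda_{(d-1-j)}}\,e^{-i\theta_j}\ket{v_{(j)}}$ (up to index bookkeeping). It is therefore a weighted permutation, whose singular values are $\sqrt{\lambda_{(j)}\lambda_{(d-1-j)}}$. Summing them and dividing by $\tr A$ gives the claimed minimum.

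The step I expect to need the most care is the majorization inequality for the lower bound; I would cite Gelfand–Naimark rather than reprove it.
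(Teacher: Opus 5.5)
Your argument is correct. For the maximum it is essentially the paper's: both reduce to the fidelity between $V^\dagger V/\tr(V^\dagger V)$ and $U_L^\dagger V^\dagger V U_L/\tr(V^\dagger V)$, and use that fidelity equals one iff the two states coincide. Your H\"older bound is simply a direct proof of $F\le 1$.

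For the minimum you take a genuinely different route. The paper uses the elementary bound
$\|A^{1/2}B^{1/2}\|_1\ge\tr(A^{1/2}B^{1/2})=\tr(A^{1/2}U_L^\dagger A^{1/2}U_L)=\sum_{i,j}\sqrt{\lambda_i\lambda_j}\,|(U_L)_{ij}|^2$.
This is a linear function of the doubly stochastic matrix $|(U_L)_{ij}|^2$. By Birkhoff--von Neumann its minimum is attained at a permutation, and the rearrangement inequality selects the anti-sorted pairing. Tightness at $U_{L,\min}$ is automatic, since $A$ and $U_{L,\min}^\dagger A U_{L,\min}$ then commute.

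This is exactly the choice $Y=W^\dagger=U_L$ in the variational formula $\|X\|_1=\max_Y|\tr(YX)|$ that you wrote down. You tried $Y=\mathbbm{1}$, which gives the possibly small $|\tr(AW)|$. The choice $Y=W^\dagger$ instead gives a manifestly nonnegative quantity and closes the argument with no external input.

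Your route instead invokes $\sum_i s_i(XY)\ge\sum_i s_i(X)\,s_{d-1-i}(Y)$, which is valid. For invertible $X,Y$, the Gel'fand--Naimark inequality applied to $X=(XY)Y^{-1}$ gives $\log s^\downarrow(X)+\log s^\uparrow(Y)\prec\log s(XY)$. The two sides have equal sums because of determinants, so convexity of $\exp$ finishes the bound.

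What your approach buys is a statement that depends only on singular values, for arbitrary factors. The cost is a nontrivial cited theorem, plus one technical point you should state explicitly. When $V^\dagger V$ is singular (e.g.\ $r=0$), the log-majorization involves $\log 0$. So prove the inequality for invertible matrices first, then pass to the limit using continuity of singular values.

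Your achievability computation is the same as the paper's.
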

The proof is given in Appendix~\ref{sec:calbestworstUL}. In Appendix~\ref{sec:comparison}, we also compare the quantities characterizing non-isometric properties in this work to the related quantities proposed in Ref.~\cite{antonini2025non}.

\section{Key Applications}
After developing the general framework, we now examine various important scenarios in which non-isometric quantum codes arise, illustrating the utility and versatility of our theory.

\subsection{Approximate GKP codes}
The well-known GKP code \cite{gottesman2000encoding} uses bosonic modes to encode logical qubits. Here we consider a single-mode GKP code encoding a single qubit. The ideal codewords are superpositions of infinitely many Dirac delta functions, which are non-normalizable. Explicitly, the ideal GKP codewords are given by
\begin{eqs}
    \ket{i}_{\textrm{ideal}}=\sum_{k\in\mathbb{Z}} \ket{(2k+i)\sqrt{\pi}}, ~i=0,1.
\end{eqs}
To obtain physical codewords, a certain form of regularization should be applied \cite{tzitrin2020progress,eaton2019non,brenner2025composable}. Here we regularize the ideal codeword by applying the $\beta$-damping operator $\hat{R}=e^{-\beta \hat{n}}$ with $\beta >0$, as proposed in Refs.~\cite{menicucci2014fault,jafarzadeh2025logical}, giving the normalized approximate codewords
\begin{align}
&\ket{i_p}=\sum_{k\in\mathbb{Z}} \hat{R}\ket{(2k+i)\sqrt{\pi}}, ~i=0,1;\\
& V^\dagger V=\sum_{i,j=0}^1 \langle i_p\ket{j_p} \ket{\overline{i}}\bra{\overline{j}},\\
& \langle i_p\ket{j_p} =\sum_{k,l \in \mathbb{Z}} \bra{(2k+i)\sqrt{\pi}} \hat{R}^2 \ket{(2l+j)\sqrt{\pi}}.
\end{align}
For two positional eigenstates $\ket{x}$ and $\ket{y}$, it can be explicitly calculated that
\begin{eqs}
    \bra{y}e^{-\beta \hat{n}}\ket{x}=\frac{1}{\sqrt{\pi (1-e^{-2\beta})}} \exp{\left(-\frac{x^2+y^2}{2\tanh{\beta}}+\frac{xy}{\sinh{\beta}}\right)}.
\end{eqs}
Although the codewords are approximate, the encoding map still satisfies the following exact identities: 
\begin{align}\label{eq:aGKPexact}
   &\langle 0_p\ket{0_p}-\langle 1_p\ket{1_p}=\langle 0_p\ket{1_p}+\langle 1_p\ket{0_p},~\forall \beta;\\
   & \lim_{\beta\to\infty} V^\dagger V =\begin{pmatrix}
       1+\frac{1}{\sqrt{2}}& \frac{1}{\sqrt{2}}\\
       \frac{1}{\sqrt{2}}& 1-\frac{1}{\sqrt{2}}
   \end{pmatrix}. \label{eq:aGKPexact2}
\end{align}
The eigenvectors of $V^\dagger V$ are 
\begin{eqs}
    \ket{v_0}= \frac{\ket{\overline{0}}+(\sqrt{2}-1)\ket{\overline{1}}}{\sqrt{4-2\sqrt{2}}}, ~ \ket{v_1}=\frac{(\sqrt{2}-1)\ket{\overline{0}}-\ket{\overline{1}}}{\sqrt{4-2\sqrt{2}}}.
\end{eqs}
As $\beta$ increases, this non-isometric encoder suppresses the relative amplitude of $\ket{v_1}$ with respect to $\ket{v_0}$ when encoding a generic logical state $\ket{\overline{\psi}}=c_0 \ket{v_0}+c_1\ket{v_1}$.  From \eqref{eq:aGKPexact2}, we see that as $\beta\to\infty$, $V^\dagger V$ becomes proportional to projection $\ket{v_0}\bra{v_0}$, eliminating the $\ket{v_1}$ component. 

With the explicit form of the codewords and $V^\dagger V$, we can directly calculate the optimal noiseless recovery fidelities. In the upper panel of Figure \ref{fig:Fmac}, we plot $F_{\mathrm{min}}$, $F_{\mathrm{avg}}$ and $F_{\mathrm{Choi}}$ with respect to the regularization parameter $\beta$. When $\beta\to0$, all optimal fidelities approach $1$. When $\beta\to\infty$, one of the eigenvalues of $V^\dagger V$ becomes $0$, and consequently  $F_{\mathrm{min}}\to 0$ while $F_{\mathrm{Choi}}=F_{\mathrm{avg}}=1/\sqrt{2}$ in this limit.

Using Eq.~\eqref{eq:FChoilogical}, we can further characterize how well each logical unitary operation can be represented on the physical system. In the lower panel of Figure \ref{fig:Fmac}, we plot the quantity $\max_{\mathcal{R}_{\mathcal{U}_P}}F_{\mathrm{Choi}}(\mathcal{R}_{\mathcal{U}_P}\circ\tilde{\mathcal{E}},\tilde{\mathcal{E}}\circ\mathcal{U}_L)$ for different gates. The first identity in \eqref{eq:aGKPexact} ensures that the Hadamard gate $H$ commutes with $V^\dagger V$ for any value of $\beta$. Hence, $H$ can be represented exactly on the physical Hilbert space.

\begin{figure}
    \centering
    \includegraphics[width=0.7\linewidth]
    {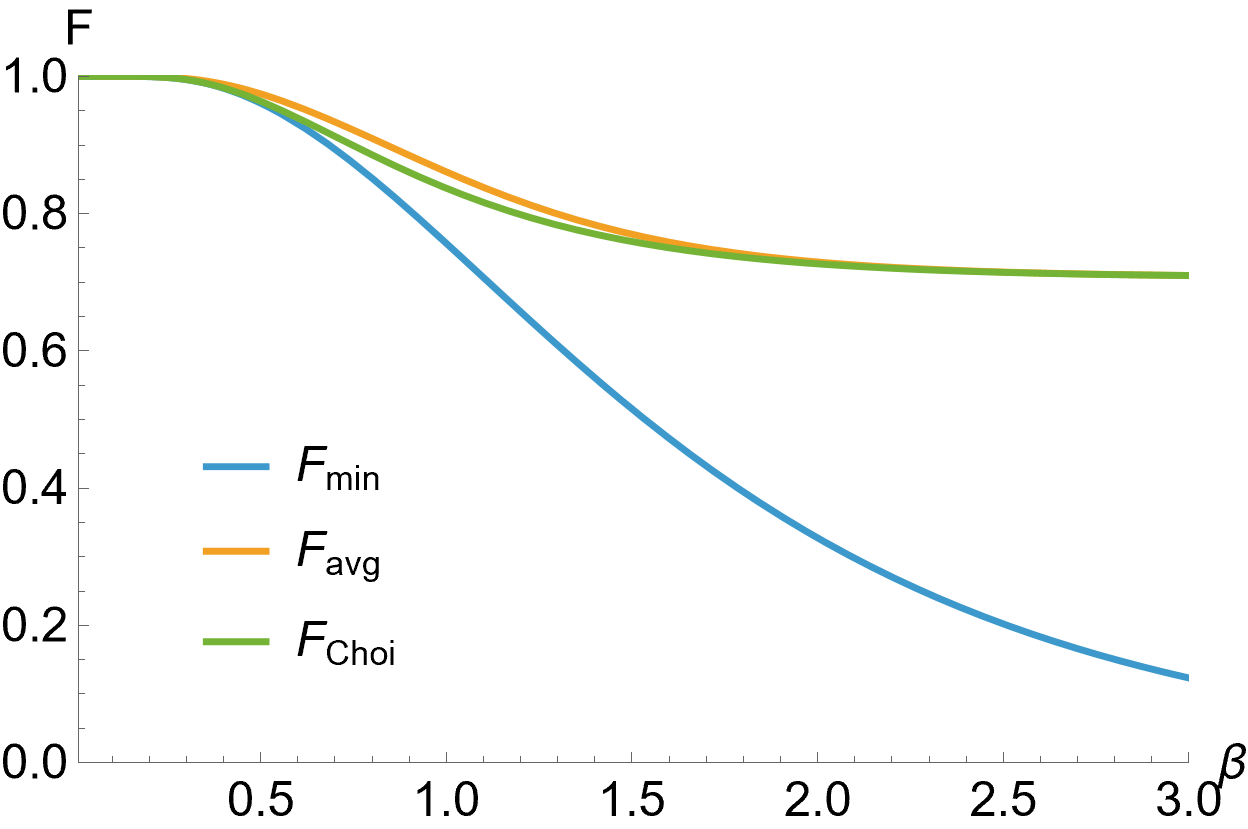}\\
    ~\\
    \includegraphics[width=0.7\linewidth]
    {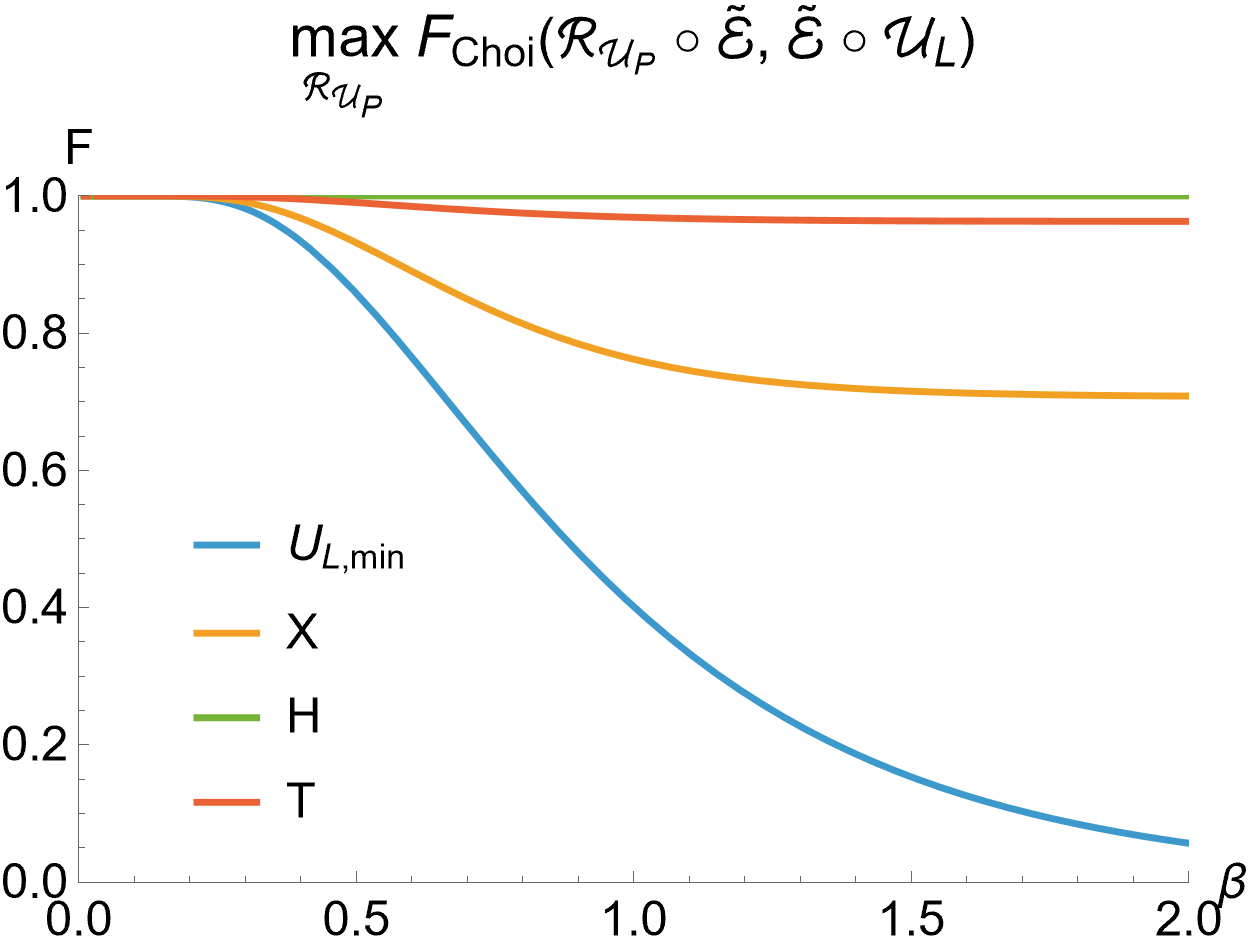}
    \caption{Recovery and logical gate accuracies of regularized GKP codes. Upper panel: $F_{\mathrm{min}}$, $F_{\mathrm{avg}}$ and $F_{\mathrm{Choi}}$ with respect to the regularization parameter $\beta$. Lower panel: $\max_{\mathcal{R}_{\mathcal{U}_P}}F_{\mathrm{Choi}}(\mathcal{R}_{\mathcal{U}_P}\circ\tilde{\mathcal{E}},\tilde{\mathcal{E}}\circ\mathcal{U}_L)$ for logical gates $X$, $H$, $T$ and $U_{L,\mathrm{min}}$ with respect to $\beta$.}
    \label{fig:Fmac}
\end{figure}

\subsection{Tiger codes}
Tiger codes provide a framework for constructing multi-mode bosonic codes without code concatenation~\cite{xu2025letting}. The logical information can be encoded in qubits, qudits, oscillators, or even rotors. Analogous to CSS codes, a tiger code can be specified by check matrices $H_X \in \mathbb{N}^{r_X\times N}$ and $H_Z \in \mathbb{Z}^{r_Z\times N}$ satisfying the CSS condition $H_X H_Z^\T=0$. The matrix $H_X$ has $r_X$ rows; its $i$-th row, denoted by $\mathbf{g}_i$, defines a dissipator. Similarly, $H_Z$ contains $r_Z$ rows; its $i$-th row, denoted by $\mathbf{h}_i$, defines a stabilizer. The dissipators and stabilizers associated with row vectors $\mathbf{g}_i, \mathbf{h}_i$ are given by 
\begin{eqs}
    &\mathbf{g}_i \mapsto \hat{a}_1^{(\mathbf{g}_i)_1}\hat{a}_2^{(\mathbf{g}_i)_2}\cdots\hat{a}_N^{(\mathbf{g}_i)_N},\\
    &\mathbf{h}_i \mapsto e^{i \varphi (\mathbf{h}_i \cdot \hat{\mathbf{n}}-\Delta_i)},~~~ \forall \varphi \in [0,2\pi),\\
\end{eqs}
where $\hat{a}_j$ denotes the annihilation operator acting on the $j$-th mode, and $\hat{\mathbf{n}}=(\hat{n}_1, \hat{n}_2,\ldots, \hat{n}_N)$ is a vector formed by occupation number operators.
Here $\mathbf{\Delta} \in \mathbb{Z}^{r_Z}$ is a constant vector independent of $H_X$ and $H_Z$, and $\Delta_i$ denotes its $i$-th entry. The logical $X$ and $Z$ operators can be determined as the generators of the homology and the cohomology groups defined by $H_X $ and $H_Z$, respectively.

The codewords of the infinite-support tiger codes are coherent states projected onto a eigenspace satisfying $H_Z \cdot \hat{\mathbf{n}}=\mathbf{\Delta}$. We can take coherent states on $N$ modes $\ket{\alpha}^{\otimes N}$ and then project onto the common eigenvalue $1$ eigenbasis of the stabilizers. Since finite-energy coherent states are generally nonorthogonal, the projected states remain nonorthogonal in general. Consequently, finite-energy tiger codes realize non-isometric encodings. In the case that the logical degree of freedom is a qudit, the X-basis codewords can be written as
\begin{eqs}
 \ket{\mu_p}=\frac{e^{N \alpha^2/2}}{\sqrt{A_{H_Z,\mathbf{\Delta}}(\alpha^2,\ldots,\alpha^2)}} \Pi_{H_Z,\mathbf{\Delta}}\bigotimes_{i=1}^N\ket{\alpha e^{-i \frac{2\pi}{d}\mu z_i}},
\end{eqs}
where
\begin{align}
   &A_{H_Z,\mathbf{\Delta}}(\alpha^2,\ldots,\alpha^2)\equiv \sum_{H_Z \mathbf{n}=\mathbf{\Delta}} \prod_{i=1}^N \frac{\alpha^{2 n_i}}{n_i!},\\
   &\Pi_{H_Z,\mathbf{\Delta}} \equiv \sum_{H_Z \mathbf{n}=\mathbf{\Delta}} \ket{\mathbf{n}}\bra{\mathbf{n}},
\end{align}
where $\sum_{H_Z \mathbf{n}=\mathbf{\Delta}}$ denotes a summation over all possible $\mathbf{n} \in \mathbb{N}^N$ satisfying $H_Z \mathbf{n}=\mathbf{\Delta}$, $\ket{\mathbf{n}}\equiv \ket{n_1,\ldots,n_N}$ is a multi-mode Fock state specified by vector $\mathbf{n}$, and $z_i$ denotes the $i$-th entry of vector $z$ representing $\overline{Z}$ operator.

With the explicit form of codewords, we can calculate the eigenvalues and eigenbasis of $V^\dagger V$
\begin{align}
    &(V^\dagger V)_{\nu \mu}=\frac{ A_{H_Z,\mathbf{\Delta}}(\alpha^2 e^{-i \frac{2\pi}{d}(\mu-\nu) z_1},\ldots,\alpha^2 e^{-i \frac{2\pi}{d}(\mu-\nu) z_N})}{ A_{H_Z,\mathbf{\Delta}}(\alpha^2,\ldots,\alpha^2)};\\
    & \lambda_l =\frac{d}{A_{H_Z,\mathbf{\Delta}}(\alpha^2,\ldots,\alpha^2)} \sum_{\substack{H_Z \mathbf{n}=\mathbf{\Delta},\\
    z\cdot\mathbf{n}~\text{mod}~{d}=l}} \prod_i^N \frac{\alpha^{2 n_i}}{n_i!}, ~l\in\mathbb{Z}_d;\\ 
    &\ket{v_l}=\frac{1}{\sqrt{d}}\sum_{\mu=0}^{d-1} e^{i \frac{2\pi}{d} l \mu} \ket{\overline{\mu}}.
\end{align}
The eigenbases of $V^\dagger V$ are identical to the $Z$-basis eigenstates $\ket{\overline{l}}$, $\forall l \in \mathbb{Z}_d$, but their codewords have different normalization.

To give a concrete example, we consider a four-mode tiger code, defined by 
\begin{eqs}
    H_X=\begin{pmatrix}
        1&1&0&0\\
        0&0&1&1\\
        0&2&0&2
    \end{pmatrix},~ H_Z=\begin{pmatrix}
        1&-1&-1&1
    \end{pmatrix}.
\end{eqs}
This code encodes a logical qubit. The logical operators are $x=\begin{pmatrix}
    0&1&0&1
\end{pmatrix}$ and $z=\begin{pmatrix}
    0&0&1&1
\end{pmatrix}$, corresponding to $\overline X=\frac{\hat{a}_2 \hat{a}_4}{\alpha^2}$, $\overline Z=(-1)^{\hat {n}_3+\hat{n}_4}$. Here we consider the $\Delta=0$ case, whose codewords and the encoding map are
\begin{align}
    & \ket{\pm_p}=\frac{e^{2\alpha^2}}{\sqrt{I_0(4\alpha^2)}} \Pi_{H_Z,0}\ket{\alpha,\alpha, \pm \alpha, \pm\alpha},\\
     &V=\ket{+_p}\bra{\overline{+}}+\ket{-_p}\bra{\overline{-}}.
\end{align}
Therefore, we can calculate 
\begin{align}
   &V^\dagger V=\begin{pmatrix}
        1& I_0(4\alpha^2)^{-1}\\
        I_0(4\alpha^2)^{-1}&1
    \end{pmatrix}\\
   & \lambda_0 =1+I_0(4\alpha^2)^{-1},~ \lambda_1 =1-I_0(4\alpha^2)^{-1}.
\end{align}
The eigenbasis of $V^\dagger V$ are the usual computational basis $\ket{\overline{0}}$ and $\ket{\overline{1}}$. Therefore, following Proposition \ref{prop:bestworstUL}, all the phase gates, such as $\overline Z$, $\overline S$, $\overline T$, can be represented exactly by a unitary on the physical Hilbert space. Meanwhile, the logical $\overline X$ gate is the corresponding $U_{L, \mathrm{min}}$.

\subsection{Holographic duality}
Beyond quantum coding theory, non-isometric encoding arises in holographic quantum gravity for fundamental reasons. A representative setting where this mechanism becomes analytically transparent is given by the random matrix model studied in Ref.~\cite{kar2023non}. Following the setup of Ref.~\cite{penington2022replica}, we consider two-dimensional Jackiw--Teitelboim gravity with end-of-world brane states in a microcanonical ensemble associated with an energy window $(E,E+\Delta E)$. In a suitable choice of bases, the bulk-to-boundary map can be realized by a matrix $V$ whose entries are proportional to independent complex Gaussian random variables with zero mean and unit variance: 
\begin{eqs}
    V: \mathcal{H}_{\text{bulk}}\to \mathcal{H}_{\text{boundary}},~V=\frac{1}{\sqrt{d_P}} C,~ C_{ij}\sim\mathcal{CN}(0,1).
\end{eqs}
We identify $\mathcal{H}_{\text{bulk}},\mathcal{H}_{\text{boundary}}$ with the logical and physical Hilbert spaces, respectively.
Say $\dim(\mathcal{H}_{\text{bulk}})=d_L$ and $\dim(\mathcal{H}_{\text{boundary}})=d_P$, which characterizes the number of brane flavors in the gravity theory and the number of boundary states in the aforementioned energy window, respectively. Here, $V^\dagger V$ is a Wishart random matrix acting on the logical Hilbert space. When $d_L>d_P$, it has rank at most $d_P$, and hence has $d_L-d_P$ exact zero eigenvalues. We are interested in the regime where $d_L$ and $d_P$ are both large with $d_L\gg d_P$. In this limit, the eigenvalue distribution of $V^\dagger V$ converges to the Marchenko–Pastur distribution
\begin{align}
    &\rho(\lambda)=\frac{\sqrt{(\lambda_+-\lambda)(\lambda-\lambda_-)}}{2\pi\lambda c}+ \left(1-\frac{1}{c}\right)\delta(\lambda),\\
    &c=\frac{d_L}{d_P},~\lambda_{\pm}=(1\pm \sqrt{c})^2.
\end{align}

With the distribution of the eigenvalues, we see that $F_{\mathrm{min}}=0$, since the encoding map has a kernel when $d_L>d_P$. Then $F_{\mathrm{Choi}}$ can be evaluated as an integral
\begin{eqs}
   F_{\mathrm{Choi}}= \frac{\frac{1}{d_L}\sum_{i=0}^{d_L-1}\lambda^{\frac{1}{2}}_i}{ (\frac{1}{d_L}\sum_{i=0}^{d_L-1}\lambda_i)^{\frac{1}{2}}}=\frac{\int \sqrt{\lambda}\rho(\lambda)d\lambda}{(\int \lambda \rho(\lambda)d\lambda)^{\frac{1}{2}}}.
\end{eqs}
The result is a certain combination of elliptic functions, only depending on the parameter $c$. In the limit $d_L\gg d_P$, thus $c\to\infty$, $F_{\mathrm{Choi}}\sim \sqrt{\frac{1}{c}}=\sqrt{\frac{d_P}{d_L}}$. In the large dimension limit, the nonzero singular values of $V$ become sharply concentrated, so the Choi and average fidelities and errors exhibit the same leading behavior. Therefore, we have a simple interpretation of the Choi fidelity result: the average error rate for recovering generic bulk information from the boundary is approximately $1-\frac{d_P}{d_L}$, which is the ratio between the dimension of the null subspace and the total dimension of the logical Hilbert space.

Gaussian random encodings also appear in other scenarios of quantum information, such as the construction of Pauli operators of overlapping qubits \cite{chao2017overlapping}. The above result of $F_{\mathrm{Choi}}$ implies that, although an exponentially large number of approximate Pauli operators can be constructed, the additional information encoded in this way cannot be recovered with high probability. This is consistent with Nayak’s private information retrieval bound \cite{nayak1999optimal}.

\section{Discussion}

In this work, we established a general information-theoretic framework for quantum error correction with non-isometric encodings and discussed their practical and physical relevance through a series of concrete examples. 
Our results provide a principled foundation for understanding and analyzing the effects of non-isometry on QEC, regardless of its physical origin.
To summarize, the central message is that the overlaps and relative normalizations of the physical codewords, encoded in the Gram operator $V^\dagger V$, impose fundamental limits on the achievable accuracy of logical information preservation and processing.

We close with several outlook remarks highlighting broader implications and future directions.
First, non-isometric maps naturally describe postselected non-unitary dynamics, as encountered in various important areas including dissipative quantum trajectories~\cite{dalibard1992wave,plenio1998quantum} and engineered dissipation approaches to quantum information processing~\cite{verstraete2009quantum,pastawski2011quantum, self_correct_GKP}. {Moreover, recent work developed Hamiltonian-engineering-based methods for approximate codeword preparation~\cite{rymarz2021hardware,kolesnikow2024gkp}. This work provides a natural scheme for evaluating the quality and exploring the utility of such physically motivated encoders.} 
Second, in the continuous variable codes  discussed above, the inaccuracy induced by non-isometry is closely tied to the energy of the codewords. Therefore, our framework can help estimate the energy requirements depending on the target accuracy and thus provide quantitative guidance for experiments.
More broadly, our theory opens the door to adapting and designing quantum codes beyond the standard isometric setting. Allowing controlled non-isometry may provide extra flexibility and better compatibility with physical constraints, at the cost of intrinsic approximation errors. We expect that further study along this direction will broaden the possibilities for fault tolerance, especially in connection with specific physical settings.

\section*{Acknowledgments}

YW thanks Miguel Tierz for discussions. YW is supported by startup funding from SIMIS. YX thanks Yujie Zhang and Yilun Li for hosting his visits in Tokyo. ZWL is supported in part by NSFC under Grant No.~12475023, Dushi Program, and a startup funding from YMSC.

\bibliography{nonisometry}

@article{beny2010general,
  title   = {General conditions for approximate quantum error correction and near-optimal recovery channels},
  author  = {B{\'e}ny, C{\'e}dric and Oreshkov, Ognyan},
  journal = {Phys. Rev. Lett.},
  volume  = {104},
  number  = {12},
  pages   = {120501},
  year    = {2010},
  doi     = {10.1103/PhysRevLett.104.120501}
}

@article{antonini2025non,
  title   = {Non-isometry, state dependence and holography},
  author  = {Antonini, Stefano and Balasubramanian, Vijay and Bao, Ning and Cao, ChunJun and Chemissany, Wissam},
  journal = {J. High Energ. Phys.},
  volume  = {2025},
  number  = {2},
  pages   = {150},
  year    = {2025},
  doi     = {10.1007/JHEP02(2025)150}
}

@article{sion1958general,
  title   = {On general minimax theorems},
  author  = {Sion, Maurice},
  journal = {Pacific J. Math.},
  volume  = {8},
  number  = {1},
  pages   = {171--176},
  year    = {1958},
  doi     = {10.2140/pjm.1958.8.171}
}

@article{araki1990inequality,
  title   = {On an inequality of Lieb and Thirring},
  author  = {Araki, Huzihiro},
  journal = {Lett. Math. Phys.},
  volume  = {19},
  number  = {2},
  pages   = {167--170},
  year    = {1990},
  doi     = {10.1007/BF01045887}
}

@article{audenaert2007araki,
  title   = {On the {Araki-Lieb-Thirring} inequality},
  author  = {Audenaert, Koenraad M. R.},
  journal = {Int. J. Inf. Syst. Sci.},
  volume  = {4},
  number  = {1},
  pages   = {78--83},
  year    = {2008}
}

@article{kim2020ghost,
  title   = {The ghost in the radiation: Robust encodings of the black hole interior},
  author  = {Kim, Isaac H. and Tang, Eugene and Preskill, John},
  journal = {J. High Energ. Phys.},
  volume  = {2020},
  number  = {6},
  pages   = {031},
  year    = {2020},
  doi     = {10.1007/JHEP06(2020)031}
}

@article{akers2024black,
  title   = {The black hole interior from non-isometric codes and complexity},
  author  = {Akers, Chris and Engelhardt, Netta and Harlow, Daniel and Penington, Geoff and Vardhan, Shreya},
  journal = {J. High Energ. Phys.},
  volume  = {2024},
  number  = {6},
  pages   = {155},
  year    = {2024},
  doi     = {10.1007/JHEP06(2024)155}
}

@article{kar2023non,
  title   = {Non-isometric quantum error correction in gravity},
  author  = {Kar, Arjun},
  journal = {J. High Energ. Phys.},
  volume  = {2023},
  number  = {2},
  pages   = {195},
  year    = {2023},
  doi     = {10.1007/JHEP02(2023)195}
}

@article{cao2025overlapping,
  title   = {Overlapping qubits from non-isometric maps and de {Sitter} tensor networks},
  author  = {Cao, ChunJun and Chemissany, Wissam and Jahn, Alexander and Zimbor{\'a}s, Zolt{\'a}n},
  journal = {Nat. Commun.},
  volume  = {16},
  number  = {1},
  pages   = {163},
  year    = {2025},
  doi     = {10.1038/s41467-024-55463-9}
}

@article{antonini2023cosmology,
  title   = {Cosmology from random entanglement},
  author  = {Antonini, Stefano and Sasieta, Martin and Swingle, Brian},
  journal = {J. High Energ. Phys.},
  volume  = {2023},
  number  = {11},
  pages   = {188},
  year    = {2023},
  doi     = {10.1007/JHEP11(2023)188}
}

@article{akers2025observers,
  title   = {On observers in holographic maps},
  author  = {Akers, Chris and Bueller, Gracemarie and DeWolfe, Oliver and Higginbotham, Kenneth and Reinking, Johannes and Rodriguez, Rudolph},
  journal = {J. High Energ. Phys.},
  volume  = {2025},
  number  = {5},
  pages   = {201},
  year    = {2025},
  doi     = {10.1007/JHEP05(2025)201}
}

@article{papadodimas2014state,
  title   = {State-dependent bulk-boundary maps and black hole complementarity},
  author  = {Papadodimas, Kyriakos and Raju, Suvrat},
  journal = {Phys. Rev. D},
  volume  = {89},
  number  = {8},
  pages   = {086010},
  year    = {2014},
  doi     = {10.1103/PhysRevD.89.086010}
}

@article{dewolfe2023non,
  title   = {Non-isometric codes for the black hole interior from fundamental and effective dynamics},
  author  = {DeWolfe, Oliver and Higginbotham, Kenneth},
  journal = {J. High Energ. Phys.},
  volume  = {2023},
  number  = {9},
  pages   = {068},
  year    = {2023},
  doi     = {10.1007/JHEP09(2023)068}
}

@article{Kong2022NearOptimal,
  title   = {Near-optimal covariant quantum error-correcting codes from random unitaries with symmetries},
  author  = {Kong, Linghang and Liu, Zi-Wen},
  journal = {PRX Quantum},
  volume  = {3},
  number  = {2},
  pages   = {020314},
  year    = {2022},
  doi     = {10.1103/PRXQuantum.3.020314}
}

@misc{faulkner2022asymptotically,
  title         = {Asymptotically isometric codes for holography},
  author        = {Faulkner, Thomas and Li, Min},
  year          = {2022},
  eprint        = {2211.12439},
  archivePrefix = {arXiv},
  primaryClass  = {hep-th},
  note          = {arXiv:2211.12439}
}

@inproceedings{chao2017overlapping,
  title     = {Overlapping qubits},
  author    = {Chao, Rui and Reichardt, Ben W. and Sutherland, Chris and Vidick, Thomas},
  booktitle = {8th Innovations in Theoretical Computer Science Conference ({ITCS} 2017)},
  series    = {Leibniz International Proceedings in Informatics ({LIPIcs})},
  volume    = {67},
  pages     = {48:1--48:21},
  year      = {2017},
  doi       = {10.4230/LIPIcs.ITCS.2017.48}
}

@inproceedings{nayak1999optimal,
  title     = {Optimal lower bounds for quantum automata and random access codes},
  author    = {Nayak, Ashwin},
  booktitle = {Proceedings of the 40th Annual Symposium on Foundations of Computer Science},
  pages     = {369--376},
  year      = {1999},
  doi       = {10.1109/SFFCS.1999.814608}
}

@article{gottesman2000encoding,
  title   = {Encoding a qubit in an oscillator},
  author  = {Gottesman, Daniel and Kitaev, Alexei and Preskill, John},
  journal = {Phys. Rev. A},
  volume  = {64},
  number  = {1},
  pages   = {012310},
  year    = {2001},
  doi     = {10.1103/PhysRevA.64.012310}
}

@article{jafarzadeh2025logical,
  title   = {Logical channels in approximate {Gottesman-Kitaev-Preskill} error correction},
  author  = {Jafarzadeh, Mahnaz and Conrad, Jonathan and Alexander, Rafael N. and Baragiola, Ben Q.},
  journal = {Phys. Rev. A},
  volume  = {112},
  number  = {6},
  pages   = {062413},
  year    = {2025},
  doi     = {10.1103/c8hk-v1qf}
}

@article{xu2025letting,
  title   = {Letting the tiger out of its cage: Bosonic coding without concatenation},
  author  = {Xu, Yijia and Wang, Yixu and Vuillot, Christophe and Albert, Victor V.},
  journal = {Phys. Rev. X},
  volume  = {15},
  number  = {4},
  pages   = {041025},
  year    = {2025},
  doi     = {10.1103/ls5r-vj7r}
}

@article{tzitrin2020progress,
  title   = {Progress towards practical qubit computation using approximate {Gottesman-Kitaev-Preskill} codes},
  author  = {Tzitrin, Ilan and Bourassa, J. Eli and Menicucci, Nicolas C. and Sabapathy, Krishna Kumar},
  journal = {Phys. Rev. A},
  volume  = {101},
  number  = {3},
  pages   = {032315},
  year    = {2020},
  doi     = {10.1103/PhysRevA.101.032315}
}

@article{eaton2019non,
  title   = {Non-Gaussian and {Gottesman-Kitaev-Preskill} state preparation by photon catalysis},
  author  = {Eaton, Miller and Nehra, Rajveer and Pfister, Olivier},
  journal = {New J. Phys.},
  volume  = {21},
  number  = {11},
  pages   = {113034},
  year    = {2019},
  doi     = {10.1088/1367-2630/ab5330}
}

@article{penington2022replica,
  title   = {Replica wormholes and the black hole interior},
  author  = {Penington, Geoff and Shenker, Stephen H. and Stanford, Douglas and Yang, Zhenbin},
  journal = {J. High Energ. Phys.},
  volume  = {2022},
  number  = {3},
  pages   = {205},
  year    = {2022},
  doi     = {10.1007/JHEP03(2022)205}
}

@article{gilchrist2005distance,
  title   = {Distance measures to compare real and ideal quantum processes},
  author  = {Gilchrist, Alexei and Langford, Nathan K. and Nielsen, Michael A.},
  journal = {Phys. Rev. A},
  volume  = {71},
  number  = {6},
  pages   = {062310},
  year    = {2005},
  doi     = {10.1103/PhysRevA.71.062310}
}

@article{horodecki1999general,
  title   = {General teleportation channel, singlet fraction, and quasidistillation},
  author  = {Horodecki, Micha{\l} and Horodecki, Pawe{\l} and Horodecki, Ryszard},
  journal = {Phys. Rev. A},
  volume  = {60},
  number  = {3},
  pages   = {1888--1898},
  year    = {1999},
  doi     = {10.1103/PhysRevA.60.1888}
}

@article{mirrahimi2014dynamically,
  title={Dynamically protected cat-qubits: a new paradigm for universal quantum computation},
  author={Mirrahimi, Mazyar and Leghtas, Zaki and Albert, Victor V and Touzard, Steven and Schoelkopf, Robert J and Jiang, Liang and Devoret, Michel H},
  journal={New Journal of Physics},
  volume={16},
  number={4},
  pages={045014},
  year={2014},
  publisher={IOP Publishing},
  url={https://doi.org/10.1088/1367-2630/16/4/045014},
  doi={10.1088/1367-2630/16/4/045014}
}

@article{menicucci2014fault,
  title = {Fault-Tolerant Measurement-Based Quantum Computing with Continuous-Variable Cluster States},
  author = {Menicucci, Nicolas C.},
  journal = {Phys. Rev. Lett.},
  volume = {112},
  issue = {12},
  pages = {120504},
  numpages = {5},
  year = {2014},
  month = {Mar},
  publisher = {American Physical Society},
  doi = {10.1103/PhysRevLett.112.120504},
  url = {https://link.aps.org/doi/10.1103/PhysRevLett.112.120504}
}

@article{kl1997,
  title = {Theory of quantum error-correcting codes},
  author = {Knill, Emanuel and Laflamme, Raymond},
  journal = {Phys. Rev. A},
  volume = {55},
  issue = {2},
  pages = {900--911},
  numpages = {0},
  year = {1997},
  month = {Feb},
  publisher = {American Physical Society},
  doi = {10.1103/PhysRevA.55.900},
  url = {https://link.aps.org/doi/10.1103/PhysRevA.55.900}
}

@article{brenner2025composable,
  title={Composable logical gate error in approximate quantum error correction: reexamining gate implementations in Gottesman-Kitaev-Preskill codes},
  author={Brenner, Lukas and Dias, Beatriz and Koenig, Robert},
  journal={arXiv preprint arXiv:2509.14658},
  year={2025}
}

@article{dalibard1992wave,
title = {Wave-function approach to dissipative processes in quantum optics},
author = {Dalibard, J. and Castin, Y. and M{\o}lmer, K.},
journal = {Phys. Rev. Lett.},
volume = {68},
pages = {580--583},
year = {1992},
doi = {10.1103/PhysRevLett.68.580}
}

@article{plenio1998quantum,
title = {The quantum-jump approach to dissipative dynamics in quantum optics},
author = {Plenio, M. B. and Knight, P. L.},
journal = {Rev. Mod. Phys.},
volume = {70},
pages = {101--144},
year = {1998},
doi = {10.1103/RevModPhys.70.101}
}

@article{verstraete2009quantum,
title = {Quantum computation and quantum-state engineering driven by dissipation},
author = {Verstraete, F. and Wolf, M. M. and Cirac, J. I.},
journal = {Nat. Phys.},
volume = {5},
pages = {633--636},
year = {2009},
doi = {10.1038/nphys1342}
}

@article{pastawski2011quantum,
title = {Quantum memories based on engineered dissipation},
author = {Pastawski, F. and Clemente, L. and Cirac, J. I.},
journal = {Phys. Rev. A},
volume = {83},
pages = {012304},
year = {2011},
doi = {10.1103/PhysRevA.83.012304}
}

@article{self_correct_GKP,
  title = {Self-Correcting Gottesman-Kitaev-Preskill Qubit and Gates in a Driven-Dissipative Circuit},
  author = {Nathan, Frederik and O'Brien, Liam and Noh, Kyungjoo and Matheny, Matthew H. and Grimsmo, Arne L. and Jiang, Liang and Refael, Gil},
  journal = {PRX Quantum},
  volume = {6},
  issue = {3},
  pages = {030352},
  numpages = {31},
  year = {2025},
  month = {Sep},
  publisher = {American Physical Society},
  doi = {10.1103/ykqb-m52z},
  url = {https://link.aps.org/doi/10.1103/ykqb-m52z}
}

@article{rymarz2021hardware,
  title = {Hardware-Encoding Grid States in a Nonreciprocal Superconducting Circuit},
  author = {Rymarz, Martin and Bosco, Stefano and Ciani, Alessandro and DiVincenzo, David P.},
  journal = {Phys. Rev. X},
  volume = {11},
  issue = {1},
  pages = {011032},
  numpages = {25},
  year = {2021},
  month = {Feb},
  publisher = {American Physical Society},
  doi = {10.1103/PhysRevX.11.011032},
  url = {https://link.aps.org/doi/10.1103/PhysRevX.11.011032}
}

@article{kolesnikow2024gkp,
  title = {Gottesman-Kitaev-Preskill State Preparation Using Periodic Driving},
  author = {Kolesnikow, Xanda C. and Bomantara, Raditya W. and Doherty, Andrew C. and Grimsmo, Arne L.},
  journal = {Phys. Rev. Lett.},
  volume = {132},
  issue = {13},
  pages = {130605},
  numpages = {6},
  year = {2024},
  month = {Mar},
  publisher = {American Physical Society},
  doi = {10.1103/PhysRevLett.132.130605},
  url = {https://link.aps.org/doi/10.1103/PhysRevLett.132.130605}
}

\newpage

\onecolumngrid

\appendix
\setcounter{theorem}{0}
\renewcommand{\thetheorem}{S\arabic{theorem}}

\section{An information-disturbance theorem for non-isometric encodings} \label{sec:theoremproof}

We formulate the following result, which relates the optimal recovered fidelity of a pair of maps to the corresponding fidelity between their complementary maps. If a general map can be written in the form $\mathcal{N}(\rho)=\sum_{i}E^{(\mathcal{N})}_i \rho E_{i}^{(\mathcal{N})\dagger}$, then its complementary map can be defined as $\mathcal{N}_c(\rho)=\sum_{i,j}\tr\left(E_i^{(\mathcal{N})}\rho E_{j}^{(\mathcal{N})\dagger}\right)\ket{i}\bra{j}$. This map can either be a quantum channel or a non-isometric map, such as $\tilde{\mathcal{E}}$, or their composition.

This result is an extension of the theorem for channels in Ref.~\cite{beny2010general}. In  the special case $(n_1, n_2)=(0,0)$, the above result reduces to the one in Ref.~\cite{beny2010general}. However, since $\tilde{\mathcal{E}}$ is not a quantum channel, the remaining cases of the present theorem require a separate proof. 

\begin{theorem} \label{theorem:fidelitycomplementary}
     Let $\tilde{\mathcal{E}}$ be defined as in (\ref{eq:nonisoencoding}), where $V=\sum_{i=0}^{d-1} \ket{i_p}\bra{\overline i}$. Then for fixed quantum channels $\mathcal{N}_1$, $\mathcal{N}_2$, $\mathcal{M}_1$ and $\mathcal{M}_2$, together with a pair of binary numbers $(n_1,n_2)\in \mathbb{Z}_2^2$, 
\begin{align}
    &\max_{\mathcal{R}}F_{\rho}(\mathcal{R}\circ \mathcal{N}_2\circ (\tilde{\mathcal{E}})^{n_1}\circ \mathcal{N}_1,\mathcal{M}_2\circ (\tilde{\mathcal{E}})^{n_2}\circ \mathcal{M}_1)=\max_{\mathcal{R^{\prime}}}F_{\rho}((\mathcal{N}_2 \circ (\tilde{\mathcal{E}})^{n_1} \circ \mathcal{N}_1)_c,\mathcal{R^{\prime}}\circ (\mathcal{M}_2\circ (\tilde{\mathcal{E}})^{n_2}\circ \mathcal{M}_1)_c),~\forall \rho \in \mathcal{D}(\mathcal{H}_L),
    \end{align}
where $\mathcal{D}(\mathcal{H}_L)$ denotes the space of density operators on $\mathcal{H}_L$, and
    \begin{align}
    &\max_{\mathcal{R}}F(\mathcal{R}\circ \mathcal{N}_2\circ (\tilde{\mathcal{E}})^{n_1}\circ \mathcal{N}_1,\mathcal{M}_2\circ (\tilde{\mathcal{E}})^{n_2}\circ \mathcal{M}_1)=\max_{\mathcal{R^{\prime}}}F((\mathcal{N}_2 \circ (\tilde{\mathcal{E}})^{n_1} \circ \mathcal{N}_1)_c,\mathcal{R^{\prime}}\circ (\mathcal{M}_2\circ (\tilde{\mathcal{E}})^{n_2}\circ \mathcal{M}_1)_c).\\
\end{align}
Here $n_i\in\{0,1\}$ for $i=1,2$ indicates whether the non-isometric encoding is inserted: $(\tilde{\mathcal E})^1=\tilde{\mathcal E}$ and $(\tilde{\mathcal E})^0=\mathrm{id}$.
The other quantum channels are chosen such that both $\mathcal{R}\circ \mathcal{N}_2\circ (\tilde{\mathcal{E}})^{n_1}\circ \mathcal{N}_1$ and $\mathcal{M}_2\circ (\tilde{\mathcal{E}})^{n_2}\circ \mathcal{M}_1$ have the same target Hilbert space.

\end{theorem}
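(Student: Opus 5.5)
The plan is to follow the Bény--Oreshkov argument and isolate the one point where $\tilde{\mathcal E}$ differs from a channel: its nonlinearity. Fix $\rho$ with purification $\ket{\psi_\rho}$ on $\mathcal H_L\otimes\mathcal H_{R}$. The key observation is that $\tilde{\mathcal E}$ is nonlinear only through the scalar normalization $\tr(V\rho V^\dagger)$. Once $\rho$ is fixed, $\tilde{\mathcal E}$ acts on $\ket{\psi_\rho}$ like the single Kraus operator $V/\sqrt{\tr(V\rho V^\dagger)}$, and $\tilde{\mathcal E}\otimes\mathrm{id}(\ket{\psi_\rho}\bra{\psi_\rho})$ is a normalized pure state.

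For the composition $\mathcal K=\mathcal N_2\circ(\tilde{\mathcal E})^{n_1}\circ\mathcal N_1$, I will build a Stinespring-type vector directly. Let $\{A_a\}$ be Kraus operators of $\mathcal N_1$ and $\{B_b\}$ those of $\mathcal N_2$. With $\sigma=\mathcal N_1(\rho)$, the Kraus operators of $\mathcal K$ are $K_{ab}=B_bVA_a/\sqrt{\tr(V\sigma V^\dagger)}$ (with $V\to\mathbbm 1$ if $n_1=0$). Define
\begin{equation*}
\ket{\Phi}=\sum_{a,b}K_{ab}\otimes\mathbbm 1_R\ket{\psi_\rho}\otimes\ket{a,b}_E .
\end{equation*}
This is a normalized pure state on system, environment and reference. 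Its reduced states are $\mathcal K\otimes\mathrm{id}(\psi_\rho)$ and $\mathcal K_c\otimes\mathrm{id}(\psi_\rho)$, where $\mathcal K_c$ is the complementary map defined before the theorem using the same $\rho$-dependent Kraus operators. I do the same for $\mathcal L=\mathcal M_2\circ(\tilde{\mathcal E})^{n_2}\circ\mathcal M_1$, obtaining $\ket{\Phi'}$ with environment $E'$.

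With these purifications, the fixed-$\rho$ statement follows from Uhlmann's theorem applied twice, exactly as for channels. On one side, a recovery $\mathcal R$ with Stinespring isometry $W_{\mathcal R}$ acting on $\ket\Phi$ yields a purification of $(\mathcal R\circ\mathcal K)\otimes\mathrm{id}(\psi_\rho)$. By Uhlmann, $F_\rho(\mathcal R\circ\mathcal K,\mathcal L)$ is the maximal overlap between $W_{\mathcal R}\ket\Phi$ and $U\ket{\Phi'}$ over isometries $U$ on the purifying systems. Maximizing over $\mathcal R$ lets both isometries vary freely. On the other side, a channel $\mathcal R'$ acting on $E'$ gives $(\mathcal R'\circ\mathcal L_c)\otimes\mathrm{id}(\psi_\rho)$, purified by $W_{\mathcal R'}\ket{\Phi'}$, while $\mathcal K_c\otimes\mathrm{id}(\psi_\rho)$ is purified by $\ket\Phi$ with system as purifier. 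Both optimizations reduce to $\max|\bra{\Phi}(X\otimes Y)\ket{\Phi'}|$ over isometries $X,Y$ embedding the output and environment factors into a common space. Hence the two sides coincide for every $\rho$. The earlier remark that fidelity and purified distance extend to such maps whenever outputs are normalized density matrices justifies using $F_\rho$ throughout.

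For the worst-case identity, I take $\min_\rho$ on both sides and need to exchange $\max_{\mathcal R}$ with $\min_\rho$. This step is the main obstacle, because the normalization $1/\sqrt{\tr(V\sigma V^\dagger)}$ makes $F_\rho$ non-concave in $\rho$ in the usual sense. My first attempt is to homogenize. Write $F_\rho(\mathcal R\circ\mathcal K,\mathcal L)$ as $\tr(V\sigma V^\dagger)^{-1/2}$ times a fidelity between unnormalized linear CP images. Pass to the squared fidelity and use the variational form $F^2=\min_{\text{POVM}}$ or $F=\min_{Y>0}\frac12[\tr(Y\cdot)+\tr(Y^{-1}\cdot)]$, which is linear in $\mathcal R$ and in the Choi-like operator $\sqrt\rho\,\cdot\,\sqrt\rho$. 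Then reparametrize $\rho\mapsto\omega\propto\sqrt\rho\, V^\dagger V\sqrt\rho$ (or equivalently normalize on the physical side) so the objective becomes concave--convex. This puts us in a position to apply Sion's minimax theorem \cite{sion1958general} over the compact convex sets of channels and states, on both sides of the equality. If the reparametrization fails to give convexity, the fallback is to apply the pointwise equality and note that both sides equal $\min_\rho$ of the same function only after the minimax exchange. In that case I would prove the exchange separately for each side by the homogenized linear form, as in Bény--Oreshkov's own argument.
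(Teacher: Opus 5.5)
Your fixed-$\rho$ argument is correct and is the paper's own. It uses normalized Stinespring vectors (the paper's $V^{(\overline{\mathcal N})}_{SE}\ket{\psi_\rho}/\sqrt{n(\rho)}$ and $V^{(\overline{\mathcal M})}_{SE}\ket{\psi_\rho}/\sqrt{m(\rho)}$), applies Uhlmann on both sides, and observes that the two double maximizations over unitaries coincide after relabeling.

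The gap is the worst-case identity. This is exactly the part that is new relative to B\'eny--Oreshkov, and neither of your routes delivers the exchange of $\max_{\mathcal R}$ and $\min_\rho$.

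\textbf{The obstruction is misplaced.} $\rho$ is the minimized variable, so Sion needs (quasi)convexity in $\rho$, not concavity.

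\textbf{The homogenization route does not produce it.} Alberti's formula is an infimum over $Y$ of expressions linear in $\ket{\psi_\rho}\bra{\psi_\rho}$. That is concave in this operator, which is the wrong direction, and $\ket{\psi_\rho}\bra{\psi_\rho}$ is not affine in $\rho$ anyway. The change of variables $\rho\mapsto\omega\propto\sqrt{\rho}\,V^\dagger V\sqrt{\rho}$ is a nonlinear bijection, so convexity in $\omega$ is unrelated to convexity in $\rho$, and you give no reason the objective is quasiconvex in $\omega$. Moreover, when both maps carry the encoding, e.g.\ $(n_1,n_2)=(1,1)$ with $\mathcal M_1=\mathcal U_L$, there are two different normalizations, $n(\rho)=\tr(\rho V^\dagger V)$ and $m(\rho)=\tr(\rho U_L^\dagger V^\dagger VU_L)$. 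No single reparametrization removes both.

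\textbf{The fallback fails for the same reason.} Arguing ``as in B\'eny--Oreshkov'' uses linearity of the Uhlmann objective in $\rho$, and that is precisely what the factor $1/\sqrt{n(\rho)m(\rho)}$ destroys.

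The missing idea is that quasiconvexity in $\rho$ survives the normalization, and that is all Sion requires. The paper relaxes the Uhlmann unitaries to contractions $A_1,A_2$, so that $\mathrm{Re}\,g$ is linear in each. It then proves by an explicit inequality that $\tr(\rho\hat O)/\sqrt{n(\rho)m(\rho)}$, with $n,m$ linear, is quasiconvex in $\rho$. Sion then moves $\min_\rho$ past the inner maximization, so both sides become $\max_{A_1,A_2}\min_\rho\mathrm{Re}\,g$.

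Your ``each side separately'' structure also closes once you add this observation:
\begin{enumerate}
\item By Uhlmann, $\sqrt{n(\rho)m(\rho)}\,F_\rho=\max_U|\tr(\rho X_U)|$, with $\rho$-independent $X_U$ built from the unnormalized Kraus operators. Hence this numerator is convex in $\rho$.
\item $\sqrt{n(\rho)m(\rho)}$ is concave, being the geometric mean of positive linear functions. So each sublevel set $\{\rho:F_\rho\le t\}$ is convex.
\item $F_\rho$ is concave and continuous in the recovery channel, which ranges over a compact convex set. Sion therefore gives $\max_{\mathcal R}\min_\rho=\min_\rho\max_{\mathcal R}$ on each side.
\end{enumerate}
Your pointwise identity then finishes the proof. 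Working with $F_\rho$ rather than $\mathrm{Re}\,g$ also avoids having to control the sign of the numerator.
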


\begin{proof}

Denote the system of interest as $S$ (the logical space $\mathcal{H}_L$ in the main text) and its purifier as $P$. A generic density matrix $\rho_S$ has its purified state $\ket{\psi_{\rho}}_{SP}$. For a generic quantum channel $\mathcal{N}$ acting on $S$ represented as $\mathcal{N}(\rho_S)=\sum_{i}E_i^{(\mathcal{N})}\rho_SE^{(\mathcal{N})\dagger}_i$, we can equivalently define an isometry $V^{(\mathcal{N})}_{SE}=\sum_{i}E_i^{(\mathcal{N})}\otimes\ket{i}_{E}$ acting on the system and the environment $E$, such that $\mathcal{N}(\rho_S)=\tr_{E} V^{(\mathcal{N})}_{SE}\rho_SV^{(\mathcal{N})\dagger}_{SE}$ and $\mathcal{N}_c(\rho_S)=\tr_{S} V^{(\mathcal{N})}_{SE}\rho_S V^{(\mathcal{N})\dagger}_{SE}=\sum_{i,j}\tr(E_i^{(\mathcal{N})}\rho_S E_{j}^{(\mathcal{N})\dagger})\ket{i}\bra{j}_E$. Equivalently, the Stinespring dilation theorem defines a unitary $U_{\mathcal{N}}:\ket{\psi_{\rho}}_S\otimes\ket{0}_{E}\mapsto\sum_{i}E_i^{(\mathcal{N})}\ket{\psi_{\rho}}_{S}\otimes\ket{i}_{E}$, such that  $\mathcal{N}(\rho_S)=\tr_{E} U^{(\mathcal{N})}_{SE}\rho_S\otimes\ket{0}\bra{0}_E U^{(\mathcal{N})\dagger}_{SE}$.

Within this framework, we denote the composite maps $\mathcal{N}_2\circ (\tilde{\mathcal{E}})^{n_1}\circ \mathcal{N}_1=\mathcal{\overline{N}}$,~$\mathcal{M}_2\circ (\tilde{\mathcal{E}})^{n_2}\circ \mathcal{M}_1=\mathcal{\overline{M}}$ and write 
    \begin{align}
    \mathcal{\overline{N}}\otimes \mathrm{id} (\ket{\psi_{\rho}}\bra{\psi_{\rho}}_{SP})&=\frac{\tr_{E} V^{(\mathcal{\overline{N}})}_{SE}(\ket{\psi_{\rho}}\bra{\psi_{\rho}}_{SP})V^{(\mathcal{\overline{N}})\dagger}_{SE}}{\tr_{SPE} V^{(\mathcal{\overline{N}})}_{SE}(\ket{\psi_{\rho}}\bra{\psi_{\rho}}_{SP})V^{(\mathcal{\overline{N}})\dagger}_{SE}},\\
    \mathcal{\overline{N}}_c\otimes \mathrm{id} (\ket{\psi_{\rho}}\bra{\psi_{\rho}}_{SP})&=\frac{\tr_{S} V^{(\mathcal{\overline{N}})}_{SE}(\ket{\psi_{\rho}}\bra{\psi_{\rho}}_{SP})V^{(\mathcal{\overline{N}})\dagger}_{SE}}{\tr_{SPE}V^{(\mathcal{\overline{N}})}_{SE}(\ket{\psi_{\rho}}\bra{\psi_{\rho}}_{SP})V^{(\mathcal{\overline{N}})\dagger}_{SE}},\\
    \mathcal{\overline{M}}\otimes \mathrm{id} (\ket{\psi_{\rho}}\bra{\psi_{\rho}}_{SP})&=\frac{\tr_{E} V^{(\mathcal{\overline{M}})}_{SE}(\ket{\psi_{\rho}}\bra{\psi_{\rho}}_{SP})V^{(\mathcal{\overline{M}})\dagger}_{SE}}{\tr_{SPE} V^{(\mathcal{\overline{M}})}_{SE}(\ket{\psi_{\rho}}\bra{\psi_{\rho}}_{SP})V^{(\mathcal{\overline{M}})\dagger}_{SE}},\\
    \mathcal{\overline{M}}_c\otimes \mathrm{id} (\ket{\psi_{\rho}}\bra{\psi_{\rho}}_{SP})&=\frac{\tr_{S} V^{(\mathcal{\overline{M}})}_{SE}(\ket{\psi_{\rho}}\bra{\psi_{\rho}}_{SP})V^{(\mathcal{\overline{M}})\dagger}_{SE}}{\tr_{SPE}V^{(\mathcal{\overline{M}})}_{SE}(\ket{\psi_{\rho}}\bra{\psi_{\rho}}_{SP})V^{(\mathcal{\overline{M}})\dagger}_{SE}},
\end{align}
in which $V^{(\mathcal{\overline{N}})}_{SE}=\sum_{ij} E_i^{(\mathcal{N}_2)}V^{n_1}E_j^{(\mathcal{N}_1)}\otimes \ket{ij}_E$ with $V=\sum_{i=0}^{d-1} \ket{i_p}\bra{\overline i}$, and $V^{(\mathcal{\overline{M}})}_{SE}=\sum_{kl} E_k^{(\mathcal{M}_2)}V^{n_2}E_l^{(\mathcal{M}_1)}\otimes \ket{kl}_E$. We abbreviate  $\tr_{SPE}V^{(\mathcal{\overline{N}})}_{SE}(\ket{\psi_{\rho}}\bra{\psi_{\rho}}_{SP})V^{(\mathcal{\overline{N}})\dagger}_{SE}=n(\rho)$ and $\tr_{SPE}V^{(\mathcal{\overline{M}})}_{SE}(\ket{\psi_{\rho}}\bra{\psi_{\rho}}_{SP})V^{(\mathcal{\overline{M}})\dagger}_{SE}=m(\rho)$. Here $E$ is the environment that is large enough for both $\mathcal{\overline N}$ and $\mathcal{\overline M}$ to act on. The purified state of $\mathcal{\overline{M}}\otimes \mathrm{id} (\ket{\psi_{\rho}}\bra{\psi_{\rho}}_{SP})$ can be taken as $\frac{1}{\sqrt{m(\rho)}}V^{(\mathcal{\overline M})}_{SE}\ket{\psi_{\rho}}_{SP}\ket{0}_{E_R}$. 

Adding the recovery channel, we have 
\begin{eqs}
&\left(\mathcal{R}\circ\mathcal{\overline{N}}\right)\otimes \mathrm{id} (\ket{\psi_{\rho}}\bra{\psi_{\rho}}_{SP})=\tr_{E_R E} \left(U^{(\mathcal{R})}_{SE_R}V^{(\mathcal{\overline{N}})}_{SE}\left(\frac{\ket{\psi_{\rho}}\bra{\psi_{\rho}}_{SP}}{n(\rho)}\otimes\ket{0}\bra{0}_{E_R}\right)V^{(\mathcal{\overline{N}})\dagger}_{SE}U^{(\mathcal{R})\dagger}_{SE_R}\right),
\end{eqs}
with purified state $\frac{1}{\sqrt{n(\rho)}}U^{(\mathcal{R})}_{SE_R}V^{(\mathcal{\overline{N}})}_{SE}\ket{\psi_{\rho}}_{SP}\ket{0}_{E_R}$.

On the other hand,
\begin{eqs}
&\left(\mathcal{R}^{\prime}\circ\mathcal{\overline M}_c\right)\otimes \mathrm{id} (\ket{\psi_{\rho}}\bra{\psi_{\rho}}_{SP})=\tr_{SE_{R^\prime}} \left(U^{(\mathcal{R}^\prime)}_{EE_{R^\prime}}V^{(\mathcal{\overline M})}_{SE}\left(\frac{\ket{\psi_{\rho}}\bra{\psi_{\rho}}_{SP}}{m(\rho)}\otimes\ket{0}\bra{0}_{E_{R^\prime}}\right)V^{(\mathcal{\overline M})\dagger}_{SE}U^{(\mathcal{R}^\prime)\dagger}_{EE_{R^\prime}}\right),
\end{eqs}
with purified state taken as $\frac{1}{\sqrt{m(\rho)}}U^{(\mathcal{R}^\prime)}_{EE_R^\prime}V^{(\mathcal{\overline M})}_{SE}\ket{\psi_{\rho}}\ket{0}_{E_{R^{\prime}}}$. The purified state of $\mathcal{\overline N}_c\otimes \mathrm{id} (\ket{\psi_{\rho}}\bra{\psi_{\rho}}_{SP})$ can be taken as $\frac{1}{\sqrt{n(\rho)}}V^{(\mathcal{\overline N})}_{SE}\ket{\psi_{\rho}}_{SP}\ket{0}_{E_{R^\prime}}$.

Using Uhlmann's representation of fidelity, we can write
    \begin{align}
    &\max_{\mathcal{R}}F_{\rho}(\mathcal{R}\circ \mathcal{\overline N},\mathcal{\overline M})=\max_{U^{(\mathcal{R})}_{SE_R}}\max_{U_{EE_R}}\frac{1}{\sqrt{n(\rho)m(\rho)}}|\bra{0}_{E_R}\bra{\psi_{\rho}}_{SP}V^{(\mathcal{\overline M})\dagger}_{SE} U_{EE_R} U^{(\mathcal{R})}_{SE_R}V^{(\mathcal{\overline N})}_{SE}\ket{\psi_{\rho}}_{SP}\ket{0}_{E_R}|,\\
    &\max_{\mathcal{R^{\prime}}}F_{\rho}(\mathcal{\overline N}_c,\mathcal{R^{\prime}}\circ \mathcal{\overline M}_c)=\max_{U^{(\mathcal{R}^\prime)\dagger}_{EE_{R^\prime}}}\max_{U_{SE_{R^\prime}}}\frac{1}{\sqrt{n(\rho)m(\rho)}}|\bra{0}_{E_{R^\prime}}\bra{\psi_{\rho}}_{SP}V^{(\mathcal{\overline M})\dagger}_{SE}U^{(\mathcal{R}^\prime)\dagger}_{EE_R^\prime}U_{SE_{R^\prime}}V^{(\mathcal{\overline N})}_{SE}\ket{\psi_{\rho}}_{SP}\ket{0}_{E_{R^\prime}}|.
    \end{align}
We see that these two quantities are indeed equivalent by identifying the labels $E_R$ with $E_{R^\prime}$ and exchanging the maximizations over two unitaries.

We denote the objective function of optimization as $g(\rho, U_1, U_2)\equiv\frac{1}{\sqrt{n(\rho)m(\rho)}}\bra{0}_{E_R}\bra{\psi_{\rho}}_{SP}V^{(\mathcal{\overline M})\dagger}_{SE} U_{EE_R} U^{(\mathcal{R})}_{SE_R}V^{(\mathcal{\overline N})}_{SE}\ket{\psi_{\rho}}_{SP}\ket{0}_{E_R},$ where $U_1\equiv U_{SE_R}$ and $U_2\equiv U_{EE_R}$. Now
\begin{align}\label{eq:proofmaxfidelity}
    &\max_{\mathcal{R}}F(\mathcal{R}\circ\mathcal{\overline N},\mathcal{\overline M})=\max_{U_1}\min_{\rho}\max_{U_2}|g(\rho, U_1, U_2)|,\\
    &\max_{\mathcal{R^{\prime}}}F(\mathcal{\overline N}_c,\mathcal{R^{\prime}}\circ \mathcal{\overline M}_c)=\max_{U_2}\min_{\rho}\max_{U_1}|g(\rho, U_1, U_2)|. \label{eq:proofmaxfidelity2}
\end{align}
Similar to the proof in Ref.~\cite{beny2010general}, the rightmost maximization over $|g(\rho, U_1, U_2)|$ is equivalent to the maximization over $\mathrm{Re}\,g(\rho, U_1, U_2)$. The maximizations over $U_1$ and $U_2$ are equivalent to the maximization over the convex set of operators $\|A_1\|\leq1$ and $\|A_2\|\leq1$. For fixed $\rho$, $\mathrm{Re}\,g(\rho, A_1, A_2)$ is linear in $A_1$ and $A_2$. It depends on the density matrix in the form $\tr(\rho \hat{O})/\sqrt{n(\rho)m(\rho)}$, with $\hat{O}$ an operator depending on $V^{(\mathcal{\overline N})}_{SE}$, $V^{(\mathcal{\overline M})}_{SE}$ and $A_1$, $A_2$. The normalization constants $n(\rho)$ and $m(\rho)$ are both linear in $\rho$ as well. 

We now show that $\text{Re}\,g(\rho, A_1, A_2)$ is quasiconvex with respect to $\rho$. A function $f$ is quasiconvex, if $\forall x_1, x_2$ and $\forall \lambda\in[0,1]$,
\[f(\lambda  x_1+(1-\lambda)x_2)\leq \max\{f(x_1),f(x_2)\}.\] 
For two density matrices $\rho_1$ and $\rho_2$, without loss of generality, assume $\tr(\rho_1\hat{O})/\sqrt{n(\rho_1)m(\rho_1)}\geq\tr(\rho_2\hat{O})/\sqrt{n(\rho_2)m(\rho_2)}$. For ease of notation, denote $c_{1/2}=\tr(\rho_{1/2}\hat{O})$ and $n_{1/2}=n(\rho_{1/2})$, $m_{1/2}=m(\rho_{1/2})$. Indeed,
\begin{align}
    &\frac{\lambda c_1+{(1-\lambda)c_2}}{\sqrt{(\lambda n_1+(1-\lambda)n_2)(\lambda m_1+(1-\lambda)m_2)}}\leq \frac{c_1}{\sqrt{n_1 m_1}}\\
    \Leftrightarrow\quad & n_1 m_1(\lambda^2 c_1^2+(1-\lambda)^2c_2^2+2\lambda(1-\lambda)c_1 c_2) \leq  c_1^2(\lambda n_1 +(1-\lambda)n_2)(\lambda m_1 +(1-\lambda)m_2)\\
    \Leftrightarrow\quad & (1-\lambda)^2 (c_1^2 n_2 m_2-c_2^2 n_1 m_1) +\lambda(1-\lambda)(c_1^2(n_1 m_2+m_1 n_2)-2c_1c_2 n_1 m_1)\geq 0.
\end{align}
Since both sides of the original inequality are positive, we can square both sides to obtain the second inequality. The second line is rearranged to become the third line. We see that 
\begin{align}
(c_1^2 n_2 m_2-c_2^2 n_1 m_1)&\geq0,\\
c_1^2(n_1 m_2+m_1 n_2)-2c_1c_2 n_1 m_1&\geq 2c_1^2\sqrt{n_1m_1n_2m_2}-2c_1c_2 n_1 m_1\notag\\&= 2c_1 \sqrt{n_1 m_1}(c_1\sqrt{n_2 m_2}-c_2\sqrt{n_1 m_1})\geq0,    
\end{align} 
by assumption. Therefore, $\mathrm{Re}\,g(\rho, A_1, A_2)$ is quasiconvex with respect to $\rho$.

Since $\mathrm{Re}\,g(\rho, A_1, A_2)$ is quasiconvex with respect to $\rho$ and linear with respect to $A_1$ and $A_2$, we can apply Sion's minimax theorem \cite{sion1958general}, which says we can exchange the maximization and minimization if the target function is upper semicontinuous and quasi-concave on the argument to be maximized, and lower semicontinuous and quasi-convex on the argument to be minimized. Therefore, we can exchange the orders of the maximizations and the minimization in Eqs.~(\ref{eq:proofmaxfidelity}) and (\ref{eq:proofmaxfidelity2}), so the two quantities are equivalent. 

\end{proof}

\section{Calculations and proofs related to optimal recovery fidelities}\label{sec:propcalculation}

\subsection{Worst-case fidelity}
Applying Theorem~\ref{theorem:fidelitycomplementary} 
for the case $(n_1,n_2)=(1,0)$, with $\mathcal{\overline N}=\tilde{\mathcal{E}}$,  $\mathcal{\overline M}=\mathrm{id}$ and $\mathcal{\overline M}_c=\tr_S$, we obtain
\begin{eqs}
F_{\min}&=\max_{\mathcal{R}}F(\mathcal{R}\circ\tilde{\mathcal{E}},\mathrm{id})=\max_{\mathcal{R^{\prime}}}F(\tilde{\mathcal{E}}_c,\mathcal{R^{\prime}}\circ \tr_S)=\max_{\mathcal{R^{\prime}}}\min_{\rho}F_{\rho}(\tilde{\mathcal{E}}_c,\mathcal{R^{\prime}}\circ \tr_S).
\end{eqs}
The output states are
\begin{align}
&\tilde{\mathcal{E}}_c\otimes \mathrm{id}(\ket{\psi_{\rho}}\bra{\psi_{\rho}}_{SP})=\frac{\tr_{S} (V\ket{\psi_{\rho}}\bra{\psi_{\rho}}_{SP}V^{\dagger})}{\tr(V\rho V^\dagger)}\otimes\ket{0}\bra{0}_E,\\
&(\mathcal{R}^{\prime})\circ\tr_S\otimes \mathrm{id} (\ket{\psi_{\rho}}\bra{\psi_{\rho}}_{SP})=\tr_{S} (\ket{\psi_{\rho}}\bra{\psi_{\rho}}_{SP})\otimes \mathcal{R}^{\prime}(\ket{0}\bra{0}_{E}).
\end{align}
In this case, the maximization over $\mathcal{R}^{\prime}$ and the minimization over $\rho$ decouple. Obviously, the maximization over $\mathcal{R}^{\prime}$ is achieved when $\mathcal{R}^{\prime}=I_E$. Because the final result does not depend on the choice of the pure dilation, we may take $\ket{\psi_\rho}$ as the canonical one. If $\rho$ can be diagonalized as $\rho=\sum_i p_i\ket{i_\rho}\bra{i_\rho}$, we choose $\ket{\psi_\rho}=\sum_i\sqrt{\rho}\ket{i_\rho}_S\ket{i_\rho}_P$. So
\begin{align}
    &\tr_{S} (V\ket{\psi_{\rho}}\bra{\psi_{\rho}}_{SP}V^{\dagger})=\sum_{i,j}\sqrt{p_i p_j}\bra{j_\rho} V^\dagger V\ket{i_\rho}\ket{i_\rho}\bra{j_\rho},\\
    &\tr_{S} (\ket{\psi_{\rho}}\bra{\psi_{\rho}}_{SP})=\rho_P=\sum_{k} p_k \ket{k_\rho}\bra{k_\rho},\\
    &\sqrt{\tr_{S} (\ket{\psi_{\rho}}\bra{\psi_{\rho}}_{SP})}\tr_{S} (V\ket{\psi_{\rho}}\bra{\psi_{\rho}}_{SP}V^{\dagger}) \sqrt{\tr_{S} (\ket{\psi_{\rho}}\bra{\psi_{\rho}}_{SP})} \notag \\
    =&\sum_{i,j,k,l}\sqrt{p_k} \ket{k_\rho}\bra{k_\rho}_P\sqrt{p_i p_j}\bra{j_\rho} V^\dagger V\ket{i_\rho} \ket{i_\rho}\bra{j_\rho}_P \sqrt{p_l} \ket{l_\rho}\bra{l_\rho}_P \\
    =&\sum_{i,j}p_i p_j \bra{j_\rho} V^\dagger V\ket{i_\rho}\ket{i_\rho}\bra{j_\rho}=(\rho V^\dagger V\rho)^\T.
\end{align}
After the calculation, we have the simplification 
\begin{eqs}\label{eq:Fminsimplifed}
    F_{\min}=\min_{\rho} \frac{\tr(\sqrt{(\rho V^\dagger V\rho)^\T})}{\sqrt{\tr(V\rho V^\dagger)}}=\min_{\rho} \frac{\tr(\sqrt{\rho V^\dagger V\rho})}{\sqrt{\tr(V\rho V^\dagger)}}.
\end{eqs}

Suppose $V$ has some zero singular values, or equivalently $\lambda_i=0$ for some $i$ in Eq.~(\ref{eq:V}), then the encoding $\tilde{\mathcal{E}}$ has a nontrivial kernel. In this case, we define $F_{\min}=0$, and it is achieved when $\rho$ is supported in $\ker V$. Indeed, such information is lost after encoding and hence cannot be recovered. We assume $\forall i, \lambda_i>0$ hereafter.

Using the Araki--Lieb--Thirring inequality \cite{araki1990inequality,audenaert2007araki}
\begin{equation}
\tr\big((A^r B^r A^r)^q \big) \leq\tr \big((ABA)^{rq} \big),
\end{equation}
for $r=\frac{1}{2}$, $q=1$, $A=\rho$ and $B=V^\dagger V$, we obtain $\tr(\rho\sqrt{V^\dagger V})\leq\tr(\sqrt{\rho V^\dagger V\rho})$. Equality is achieved if $[\rho, \Pi_\rho V^\dagger V\Pi_\rho]=0$, in which $\Pi_\rho$ is the support projector of $\rho$. Analyzing the quantity $\frac{\tr(\rho \sqrt{V^\dagger V})}{\sqrt{\tr(V\rho V^\dagger)}}$, it can be written as $\frac{\sum_{i=0}^{d-1} p_i \lambda_i^{\frac{1}{2}}}{\sqrt{\sum_{i=0}^{d-1} p_i \lambda_i}}$ in the diagonal basis of $V^\dagger V$, where $p_i\equiv\bra{v_i}\rho\ket{v_i}$ are diagonal entries of $\rho$ in the eigenbasis of $V^\dagger V$. Therefore, we have 
\begin{eqs}
    F_{\min}=\min_{\rho} \frac{\tr(\sqrt{\rho V^\dagger V\rho})}{\sqrt{\tr(V\rho V^\dagger)}}&\geq \min_{\rho} \frac{\tr(\rho \sqrt{V^\dagger V})}{\sqrt{\tr(V\rho V^\dagger)}}= \min_{\{p_i\}}\frac{\sum_{i=0}^{d-1} p_i \lambda_i^{\frac{1}{2}}}{\sqrt{\sum_{i=0}^{d-1} p_i \lambda_i}}.
\end{eqs}

Defining 
\begin{equation}
    \tilde f\equiv \frac{\sum_{i=0}^{d-1} p_i \lambda_i^{\frac{1}{2}}}{\sqrt{\sum_{i=0}^{d-1} p_i \lambda_i}},
\end{equation}
 we see that if $\{\lambda_i\}$ has degeneracy, then the change of the distribution of $\{p_i\}$ within one degenerate set of eigenvalues does not change the sum. Therefore, we can assign a single $p_i$ for one degenerate set, so we can write the sum from $0$ to $k$, with $k\leq d-1$, and we assume $\{\lambda_i\}$ are mutually distinct hereafter.  Because of the constraint $\sum_i p_i=1$, the $\{p_i\}$ are not independent. Without loss of generality, take $p_k=1-\sum_{j=0}^{k-1}p_j$. To find the saddle point, we do the variation with respect to the independent $p_i$'s. If this set of equations has a set of solution $\{p_i^{*}\}$, then the following explicit calculation implies that $\forall i=0,\ldots,k-1$, $\lambda_i=\lambda_j$. This contradicts our assumption that the $\{\lambda_i\}$ are mutually distinct, unless $k=1$.
\begin{align}
    &\frac{\partial\tilde f}{\partial p_i}=\frac{\left(\lambda_i^{\frac{1}{2}}-\lambda_k^{\frac{1}{2}}\right)\sqrt{\sum_{j}^k p_j \lambda_j}-\left(\lambda_i-\lambda_k\right)\frac{\sum_{j}^k p_j \lambda_j^{\frac{1}{2}}}{2\sqrt{\sum_{j}^k p_j \lambda_j}}}{\sum_{i=0}^k p_i \lambda_i}=0\\
    \Rightarrow&\quad  \lambda_i^{\frac{1}{2}}+\lambda_k^{\frac{1}{2}}=\frac{2 \sum_{j=0}^k p_j^* \lambda_j}{\sum_{j=0}^k p_j^* \lambda_j^{\frac{1}{2}}},~\forall i=0,\ldots,k-1.\\
    \Rightarrow&\quad  \lambda_i^{\frac{1}{2}}+\lambda_k^{\frac{1}{2}}=\lambda_j^{\frac{1}{2}}+\lambda_k^{\frac{1}{2}},~ \forall i,j=0,\ldots,k-1.
\end{align}

The above analysis implies that for $k>1$, the saddle points cannot be in the interior of the region $0<p_i<1$. The extremal points have to be on the boundary, which means some of the $p_i$'s are $0$. This effectively reduces $k$ to a lower dimension $k^\prime$. If $k^\prime>1$, we can reiterate the above analysis and further reduce $k^\prime$ to $1$. For the two-dimensional case, we can calculate the minimum explicitly. 
\begin{align}
    &\tilde f=\frac{p\lambda_0^{\frac{1}{2}}+(1-p)\lambda_1^{\frac{1}{2}}}{\sqrt{p\lambda_0+(1-p)\lambda_1}},\\
    &\frac{\partial\tilde f}{\partial p_i}=0\quad\Rightarrow\quad p^*=\frac{ \lambda_1^{\frac{1}{2}}}{ \lambda_0^{\frac{1}{2}}+ \lambda_1^{\frac{1}{2}}},\\
    &\tilde f (p=p^*)=\frac{2\lambda_0^{\frac{1}{4}}\lambda_1^{\frac{1}{4}}}{\lambda_0^{\frac{1}{2}}+\lambda_1^{\frac{1}{2}}}.
\end{align}
This quantity depends only on the ratio $\frac{\lambda_0}{\lambda_1}$ and is symmetric under the exchange of $\lambda_0$ and $\lambda_1$. Therefore we can assume $\frac{\lambda_0}{ \lambda_1}<1$ and the minimum of $\tilde f$ is achieved with minimal $\frac{\lambda_0}{\lambda_1}$. Because $\lambda_{\min} \leq\lambda_i\leq\lambda_{\max}$, we finally have
\begin{eqs}
    F_{\min}=\min_{\{p_i\}}\tilde f=\frac{2\lambda_{\max}^{\frac 14}\lambda_{\min}^{\frac 14}}{\lambda_{\max}^{\frac 12}+\lambda_{\min}^{\frac 12}}=\frac{2r^{\frac{1}{4}}}{1+r^{\frac{1}{2}}},
\end{eqs}
as claimed.

When $\lambda_{\min}=0$, the above formula gives $F_{\min}=0$, which is consistent with the previous analysis. The final remaining case is $k=0$. In this case, we always have $\tilde f=1$. This happens when all eigenvalues $\{\lambda_i\}$ are identical or when $\rho$ is a pure state. The former case is consistent with the formula above. The latter case gives the maximum of $\tilde f$, which is not relevant for our purposes.

\subsection{Choi fidelity}
Using Theorem~\ref{theorem:fidelitycomplementary} 
for the case $\mathcal{\overline N}=\tilde{\mathcal{E}}$, $\mathcal{\overline M}=\mathrm{id}$ and $\mathcal{\overline M}_c=\tr_S$,
the calculation of $F_{\mathrm{Choi}}$ is similar to that of $F_{\min}$. We repeat the procedure and achieve a similar equation to Eq.~(\ref{eq:Fminsimplifed}).

\begin{eqs}
    F_{\mathrm{Choi}}&= \frac{\tr(\sqrt{\rho_{\mathrm{Choi}} V^\dagger V\rho_{\mathrm{Choi}}})}{\sqrt{\tr(V\rho_{\mathrm{Choi}} V^\dagger)}}=\frac{\tr(\sqrt{V^\dagger V})}{\sqrt{d \cdot \tr(V^\dagger V)} }=\frac{\sum_{i=0}^{d-1}\lambda^{\frac{1}{2}}_i}{\sqrt{d}(\sum_{i=0}^{d-1}\lambda_i)^{\frac{1}{2}}}.
\end{eqs}
Here we use $\rho_{\mathrm{Choi}}=\tr_P (\ket{\psi_c}\bra{\psi_c})=\frac{\mathbbm{1}_L}{d}$, where $\ket{\psi_c}=\sum_{i=0}^{d-1} \frac{1}{\sqrt{d}}\ket{\overline i}\ket{\overline i}$.

From Eq.~\eqref{eq:FChoi},
\begin{eqs}
    F_{\mathrm{Choi}}^2&=\max_{\{E_k\}}\sum_k\frac{\bra{\psi_c}E_k V\ket{\psi_c}\bra{\psi_c}V^\dagger E_k^\dagger\ket{\psi_c}}{\bra{\psi_c}V^\dagger V\ket{\psi_c}}=\max_{\{E_k\}}\frac{1}{d \cdot \tr(V^\dagger V)}\sum_k \tr(E_k V) \tr(V^\dagger E_k^\dagger),
\end{eqs}
we see that the optimal recovery channel is achieved when $E_1=\sum_i \ket{v_i}\bra{\tilde i_p} U^\dagger$ and other Kraus operators $E_i V=0$ to satisfy $\sum_k E_k^\dagger E_k=\mathbbm{1}_P$, where $\mathbbm{1}_P$ is the identity matrix in physical Hilbert space $\mathcal{H}_P$. Here $V$ is represented as a polar decomposition $V=\sum_i \lambda_i^{1/2} U \ket{\tilde i_p}\bra{v_i}$, with $U$  unitary and $\{\ket{\tilde i_p}\}$ a set of orthonormal bases in the physical Hilbert space.

\subsection{Average fidelity}
Writing out Eq.~\eqref{eq:Favg} explicitly, it is
\begin{eqs}
    F_{\mathrm{avg}}^2&=\max_{\{E_k\}}\int_{\mathrm{Haar}} d\psi \sum_k\frac{\bra{\psi}E_k V\ket{\psi}\bra{\psi}V^\dagger E_k^\dagger\ket{\psi}}{\bra{\psi}V^\dagger V\ket{\psi}}=\max_{\{E_k\}} \tr (O_E O_{\psi}),
\end{eqs}
where $\int d\psi$ is an integral over the Haar measure in the logical Hilbert space, and we have defined  
\begin{eqs}
    O_E\equiv\sum_k E_k V\otimes V^\dagger E_k^\dagger,~O_{\psi}\equiv\int_{\mathrm{Haar}} d\psi \frac{\ket{\psi}\bra{\psi}\otimes\ket{\psi}\bra{\psi}}{\bra{\psi}V^\dagger V\ket{\psi}}.
\end{eqs}

We first calculate $O_{\psi}$. Under the basis of the eigenstates of $V^\dagger V$, the Haar measure can be written as an integral over the amplitudes and phases of each basis. Specifically,
\begin{eqs}\label{eq:Haarbasis}
    \int_{\mathrm{Haar}} d\psi=\frac{(d-1)!}{(2\pi)^d}\int d p_0 \cdots d p_{d-1} \int d \theta_0 \cdots d \theta_{d-1} ~\delta\left(\sum_{i=0}^{d-1} p_i -1\right),
\end{eqs}
where $(d-1)!$ is the normalization constant from the integral of the standard $(d-1)$-simplex. So after integrating out the phases, we obtain
\begin{eqs}
    &O_{\psi}=(d-1)!\int d p_0 \cdots d p_{d-1}  ~\delta\left(\sum_{l=0}^{d-1} p_l -1\right)\frac{\sum_{i=0}^{d-1} p_i^2 \ket{v_i v_i}\bra{v_i v_i}+\sum_{i\neq j}^{d-1} p_i p_j (\ket{v_i v_j}\bra{v_i v_j}+\ket{v_i v_j}\bra{v_j v_i})}{\sum_{k=0}^{d-1} \lambda_k p_k}.
\end{eqs}
Integrating out the $p_i$ dependent part, we obtain the explicit form of the matrix coefficient of $O_{\psi}$,
\begin{eqs}
    O_{ij}=&(d-1)!\int d p_0 \cdots d p_{d-1}  ~\delta\left(\sum_{l=0}^{d-1} p_l -1\right)\frac{ p_i p_j }{\sum_{k=0}^{d-1} \lambda_k p_k}\\
     =& \frac{\partial}{\partial \lambda_i}\frac{\partial}{\partial \lambda_j}\left( \frac{1}{d} \sum_{n=0}^{d-1} \frac{\lambda_n^d \log\lambda_n}{\prod_{m\neq n}(\lambda_n -\lambda_m)} \right),\\
     \equiv & \frac{1}{d}\lambda^d \log\lambda [\lambda_0, \ldots , \lambda_{d-1},\lambda_i,\lambda_j].\\
\end{eqs}
This is known as the divided difference of the function $\frac{1}{d}\lambda^d \log\lambda$. It is valid for any $i$, $j$, and the case of degenerate eigenvalues is defined via taking the limit.

Now we turn back to the maximization of $F_{\mathrm{avg}}^2$. With the explicit form of $O_{ij}$, we now have
\begin{eqs}\label{eq:Favgcoefficent}
    F_{\mathrm{avg}}^2=&\max_{\{E_k\}}\sum_k \Bigg[\sum_{i=0}^{d-1} O_{ii} \bra{v_i}E_k V\ket{v_i}\bra{v_i}V^\dagger E_k^\dagger\ket{v_i} +
    \sum_{i\neq j}^{d-1} O_{ij} \Big(\bra{v_i}E_k V\ket{v_i}\bra{v_j}V^\dagger E_k^\dagger\ket{v_j}
    + \bra{v_j}E_k V\ket{v_i}\bra{v_i}V^\dagger E_k^\dagger\ket{v_j}\Big ) \Bigg]\\
    =&\max_{\{E_k\}}\sum_k \Bigg(\sum_{i=0}^{d-1} \lambda_i O_{ii} E_{k,ii} E_{k,ii}^\dagger + \sum_{i\neq j}^{d-1} \sqrt{\lambda_i\lambda_j} O_{ij} E_{k,ii} E_{k,jj}^\dagger+ \sum_{i\neq j}^{d-1} \lambda_i O_{ij} E_{k,ji} E_{k,ij}^\dagger \Bigg).
\end{eqs}
Similar to the explicit recovery channel of $F_{\mathrm{Choi}}$, we use the polar decomposition $V=\sum_i \lambda_i^{1/2} U \ket{\tilde i_p}\bra{v_i}$, with $U$  unitary and define $E_k=\sum_{i,j} E_{k,ij}\ket{v_i}\bra{\tilde j_p} U^\dagger$. Though we may need to add extra Kraus operators to satisfy $\sum_k E_k^\dagger E_k=\mathbbm{1}_P$ in the sum, we only need to consider the entries contributing to $E_k V\neq 0$ and treat each $E_k$ as a square matrix. 

We first show that $\forall i\neq j$, $O_{ij}\leq O_{ii}$. By renaming the integral variable, we have
\begin{align}
    O_{ii}-O_{ij}=&\frac{1}{2}\int d\mu(p) \Big(\frac{p_i^2}{\lambda_i p_i+\lambda_j p_j+R}+\frac{p_j^2}{\lambda_i p_j+\lambda_j p_i+R}-\frac{p_i p_j}{\lambda_i p_i+\lambda_j p_j+R}-\frac{p_i p_j}{\lambda_i p_j+\lambda_j p_i+R} \Big)\\
    =&\frac{1}{2}\int d\mu(p) \frac{(p_i-p_j)^2 \big(R+\lambda_j (p_i+p_j) \big)}{(\lambda_i p_i+\lambda_j p_j+R)(\lambda_i p_j+\lambda_j p_i+R)}\geq 0,
\end{align}
where $R=\sum_{k\neq i,j}\lambda_k p_k$ and $d\mu (p)$ is a shorthand notation for the $p$ integral in Eq.~\eqref{eq:Haarbasis}. The difference is greater than zero because the integrand is manifestly non-negative. 

The trace-preserving constraint for Kraus operators can be written as 
\begin{align}\label{eq:Krausconstraint}
    \sum_{k}\sum_{j} E^{\dagger}_{k,ij} E_{k,ji}=\sum_{k}\sum_{j} |E_{k,ji}|^2=1,~\forall i.
\end{align}
Therefore, combining the first and the third terms in Eq.~\eqref{eq:Favgcoefficent}, we have
\begin{align}
    \sum_k \left(\sum_{i=0}^{d-1} \lambda_i O_{ii} E_{k,ii} E_{k,ii}^\dagger + \sum_{i\neq j}^{d-1}   \lambda_i O_{ij} E_{k,ji} E_{k,ij}^\dagger \right)
    \leq &\sum_k \left(\sum_{i=0}^{d-1} \lambda_i O_{ii} E_{k,ii} E_{k,ii}^\dagger + \sum_{i\neq j}^{d-1}   \lambda_i O_{ii} E_{k,ji} E_{k,ij}^\dagger \right)\\
    =& \sum_k \left(\sum_{i=0}^{d-1} \lambda_i O_{ii} \sum_{j} E_{k,ji} E_{k,ij}^\dagger \right)= \sum_{i=0}^{d-1} \lambda_i O_{ii}.
\end{align}
Equality is achieved when each $E_k$ only has diagonal entries.

For the second term, we have
\begin{align}
    \sum_k\sum_{i\neq j}^{d-1} \sqrt{\lambda_i\lambda_j} O_{ij} E_{k,ii} E_{k,jj}^\dagger\leq& \sum_{i\neq j}^{d-1} \sqrt{\lambda_i\lambda_j} O_{ij}\left(\sum_k |E_{k,ii}|^2\right)^{1/2} \left(\sum_k |E_{k,jj}|^2\right)^{1/2}\\
    \leq &\sum_{i\neq j}^{d-1} \sqrt{\lambda_i\lambda_j} O_{ij}.
\end{align}
The first inequality is a Cauchy inequality for the sum over Kraus operators. The second inequality comes from the constraint Eq.~\eqref{eq:Krausconstraint}. Equality is achieved when each $E_k$ only has diagonal entries and $E_{k,ii}=E_{k,jj},~\forall i,j$. In this case, we can simplify to have a single Kraus operator with ${E_{1,ii}}=1$.

Combining the above analysis, we obtain 
\begin{eqs}
    F_{\mathrm{avg}}^2&\leq \sum_{i,j=0}^{d-1} \sqrt{\lambda_i\lambda_j} O_{ij}=\sum_{i,j=0}^{d-1} \sqrt{\lambda_i\lambda_j} \frac{\partial}{\partial \lambda_i}\frac{\partial}{\partial \lambda_j}\left( \frac{1}{d} \sum_{n=0}^{d-1} \frac{\lambda_n^d \log\lambda_n}{\prod_{m\neq n}(\lambda_n -\lambda_m)} \right).
\end{eqs}
The inequality is achieved when there is a single Kraus operator $E_1=\sum_i \ket{v_i}\bra{\tilde i_p} U^\dagger$. This is the same recovery channel as the case for $F_{\mathrm{Choi}}$.

\subsection{An inequality between average and Choi optimal fidelities under noise}\label{sec:avgvschoi}
Here we prove a simple relation that $F_{\mathrm{avg}}^{(\mathcal{N})2}\geq F_{\mathrm{Choi}}^{(\mathcal{N})2}$, which implies $\epsilon_{\mathrm{Choi}}^{(\mathcal{N})}\geq\epsilon_{\mathrm{avg}}^{(\mathcal{N})}$.

\begin{proof}
Suppose the error channel $\mathcal{N}$ can be written in terms of the Kraus operators $\mathcal{N}(\cdot)=\sum_l N_l(\cdot)N_l^\dagger$. For $F_{\mathrm{Choi}}^{(\mathcal{N})2}$, we assume that the optimal recovery channel  is achieved with a set of Kraus operators $\mathcal{R}_c(\cdot)=\sum_k E_{c,k}(\cdot)E_{c,k}^\dagger$ and write 
\begin{eqs}
   F_{\mathrm{Choi}}^{(\mathcal{N})2}
   &= \sum_k \sum_l\frac{\bra{\psi_c}E_{c,k} N_l V\ket{\psi_c}\bra{\psi_c}V^\dagger N_l^\dagger E_{c,k}^\dagger\ket{\psi_c}}{\bra{\psi_c}V^\dagger V\ket{\psi_c}}\\
   &=\sum_{k} \sum_l \frac{1}{d\tr(V^\dagger V)}\tr(E_{c,k} N_l V) \tr(V^\dagger N_l^\dagger E_{c,k}^\dagger)\\
   &=\sum_{k} \sum_l \frac{d}{\tr(V^\dagger V)}\left |\int_{\mathrm{Haar}} d\psi \bra{\psi} E_{c,k} N_l V \ket{\psi} \right|^2\\
   &\equiv \sum_{k} \sum_l \frac{\tr(V^\dagger V)}{d} \left|\int d\nu(\psi)  O_{c,kl}(\psi)\right|^2.
\end{eqs}
Here, because $\int_{\mathrm{Haar}} d\psi \bra{\psi}V^\dagger V\ket{\psi}=\frac{\tr(V^\dagger V)}{d}$, we can define a new measure $d\nu(\psi)\equiv \frac{d}{\tr(V^\dagger V)}d\psi \bra{\psi}V^\dagger V\ket{\psi}$. We also introduced the shorthand notation $O_{c,kl}(\psi)\equiv \frac{\bra{\psi} E_{c,k} N_l V \ket{\psi}}{\bra{\psi}V^\dagger V\ket{\psi}}$. For each $k$ and $l$, we have 
\begin{eqs}
    \int d\nu(\psi)  |O_{c,kl}(\psi)|^2 \geq \left|\int d\nu(\psi)  O_{c,kl}(\psi)\right|^2.
\end{eqs}
Therefore, we have 
\begin{eqs}
   F_{\mathrm{Choi}}^{(\mathcal{N})2}&\leq  \sum_{k} \sum_l \frac{\tr(V^\dagger V)}{d} \int d\nu(\psi)  |O_{c,kl}(\psi)|^2\\
   &= \sum_{k} \sum_l \int_{\mathrm{Haar}} d\psi  \frac{|\bra{\psi} E_{c,k} N_l V \ket{\psi}|^2}{\bra{\psi}V^\dagger V\ket{\psi}} \\
    &\leq \max_{\{E_k\}} \sum_{k} \sum_l \int_{\mathrm{Haar}} d\psi  \frac{|\bra{\psi} E_{k} N_l V \ket{\psi}|^2}{\bra{\psi}V^\dagger V\ket{\psi}}\\
    &=F_{\mathrm{avg}}^{(\mathcal{N})2}.
\end{eqs}
The last inequality follows from the fact that $\mathcal{R}_c$ may not be the optimal recovery channel for the maximization of $F_{\mathrm{avg}}^{(\mathcal{N})2}$.

\end{proof}

\subsection{Proof of Proposition \ref{prop:optimalFnoise}}\label{sec:optimalFineqproof}

\begin{proof}
(a):
Suppose that the optimal recovery channels of $F_{\mathrm{min}}$ and $F_{\mathrm{min}}^{(\mathcal{N})}$ are attained by $\mathcal{R}_o$ and $\mathcal{R}_o^{(\mathcal{N})}$, respectively. Then $\mathcal{R}_o^{(\mathcal{N})}\circ \mathcal{N}$ can be seen as a particular recovery channel of $\tilde{\mathcal{E}}$. Since this candidate channel is not necessarily optimal compared with $\mathcal{R}o$,  we have $F_{\mathrm{min}}^{(\mathcal{N})}\leq F_{\mathrm{min}}$ and hence $\epsilon_{\mathrm{min}}^{(\mathcal{N})}\geq \epsilon_{\mathrm{min}}$. Similarly, we can prove that $\epsilon_{\mathrm{Choi}}^{(\mathcal{N})}\geq \epsilon_{\mathrm{Choi}}$ and $\epsilon_{\mathrm{avg}}^{(\mathcal{N})}\geq \epsilon_{\mathrm{avg}}$.

For any state $\rho\in\mathcal{H}_L$, and arbitrary channels $ \mathcal{R}_{\mathcal{N}}:\mathcal{H}_P\to\mathcal{H}_P$, $\mathcal{R}_{i}:\mathcal{H}_P\to\mathcal{H}_L$, 
\begin{eqs}
    \min_{\mathcal{R}} \epsilon_{\rho}(\mathcal{R}\circ\mathcal{N}\circ \tilde{\mathcal{E}},\mathrm{id})&\leq \epsilon_{\rho}(\mathcal{R}_i\circ\mathcal{R_{\mathcal{N}}}\circ\mathcal{N}\circ \tilde{\mathcal{E}},\mathrm{id})\leq \epsilon_{\rho}(\mathcal{R}_i\circ\mathcal{R_{\mathcal{N}}}\circ\mathcal{N}\circ \tilde{\mathcal{E}},\mathcal{R}_i\circ \tilde{\mathcal{E}}) +\epsilon_{\rho}(\mathcal{R}_i\circ \tilde{\mathcal{E}},\mathrm{id})\\&\leq \epsilon_{\rho}(\mathcal{R_{\mathcal{N}}}\circ\mathcal{N}\circ \tilde{\mathcal{E}},\tilde{\mathcal{E}}) +\epsilon_{\rho}(\mathcal{R}_i\circ \tilde{\mathcal{E}},\mathrm{id}).
\end{eqs}
Because $\mathcal{R}_{\mathcal{N}}$ and $\mathcal{R}_i$ are totally arbitrary, we can minimize over them separately to get a tighter bound.

Taking the maximum over $\rho$ on both sides, followed by the minimization over $\mathcal{R}_{\mathcal{N}}$ and $\mathcal{R}_i$, yields (a). The legitimacy of interchange $\max_{\rho}$ and $\min_{\mathcal{R}}$ is proven in Appendix~\ref{sec:theoremproof}. If we take $\rho=\ket{\psi_c}\bra{\psi_c}$ as the Choi state and minimize over $\mathcal{R}_{\mathcal{N}}$ and $\mathcal{R}_i$, we obtain (a) for the Choi quantities. For the average case, we note that $\epsilon^{(\mathcal{N})}_{\mathrm{avg}}\leq \epsilon_{\mathrm{avg}}(\mathcal{R}_i\circ\mathcal{R_{\mathcal{N}}}\circ\mathcal{N}\circ \tilde{\mathcal{E}},\mathrm{id})$. If we first integrate the inequality $\epsilon_{\rho}(\mathcal{R}_i\circ\mathcal{R_{\mathcal{N}}}\circ\mathcal{N}\circ \tilde{\mathcal{E}}, \mathrm{id})\leq \epsilon_{\rho}(\mathcal{R_{\mathcal{N}}}\circ\mathcal{N}\circ \tilde{\mathcal{E}},\tilde{\mathcal{E}}) +\epsilon_{\rho}(\mathcal{R}_i\circ \tilde{\mathcal{E}}, \mathrm{id})$ over the Haar measure of pure states and then minimize over $\mathcal{R}_{\mathcal{N}}$ and $\mathcal{R}_i$, we obtain (a) for the average quantities.

(b): Follows immediately from (a).

(c): It is easy to see $\min_{\mathcal{R}_{\mathcal{N}}}\epsilon_{\mathrm{min}}(\mathcal{R_{\mathcal{N}}}\circ\mathcal{N}\circ \tilde{\mathcal{E}},\tilde{\mathcal{E}})=0$ is equivalent to $\min_{\mathcal{R}_{\mathcal{N}}}\epsilon_{\mathrm{Choi}}(\mathcal{R_{\mathcal{N}}}\circ\mathcal{N}\circ \tilde{\mathcal{E}},\tilde{\mathcal{E}})=0$ and $\min_{\mathcal{R}_{\mathcal{N}}}\epsilon_{\mathrm{avg}}(\mathcal{R_{\mathcal{N}}}\circ\mathcal{N}\circ \tilde{\mathcal{E}},\tilde{\mathcal{E}})=0$. Using Theorem~\ref{theorem:fidelitycomplementary} for the case $(n_1,n_2)=(1,1)$, $\mathcal{N}_1=\mathcal{M}_1=\mathrm{id}_L$, $\mathcal{N}_2=\mathcal{N}$ and $\mathcal{M}_2=\mathrm{id}_P$, we have

\begin{align}
&(\mathcal{R}^{\prime}\circ\tilde{\mathcal{E}}_c)\otimes \mathrm{id}(\ket{\psi_{\rho}}\bra{\psi_{\rho}}_{SP})=\frac{\tr_{S} (V\ket{\psi_{\rho}}\bra{\psi_{\rho}}_{SP}V^{\dagger})}{\tr(V\rho V^\dagger)}\otimes\mathcal{R}^{\prime}(\ket{0}\bra{0}_{E}),\\
& (\mathcal{N}\circ\tilde{\mathcal{E}})_c\otimes \mathrm{id}(\ket{\psi_{\rho}}\bra{\psi_{\rho}}_{SP})=\sum_{i,j}\frac{\tr_{S} (N_i V\ket{\psi_{\rho}}\bra{\psi_{\rho}}_{SP}V^{\dagger}N_j^\dagger)}{\tr(V\rho V^\dagger )}\otimes \ket{i}\bra{j}_{E}.
\end{align}
The channel $\mathcal{R}^{\prime}(\ket{0}\bra{0}_{E})$ always maps to a fixed state $\sigma_{0E}$. The condition $\min_{\mathcal{R}_{\mathcal{N}}}\epsilon_{\mathrm{min}}(\mathcal{R_{\mathcal{N}}}\circ\mathcal{N}\circ \tilde{\mathcal{E}},\tilde{\mathcal{E}})=0$ holds if and only if the above two states are identical for any state $\ket{\psi_\rho}$. This is only possible when $V^\dagger N^\dagger_i N_j V=\Lambda_{ij} V^\dagger V,~ \forall~i,j,$ and we can choose $\sigma_{0E}$ such that $\bra{i}\sigma_{0E}\ket{j}=\Lambda_{ij}$.

\end{proof}

\section{Calculations related to logical operators}\label{sec:callogical}

We first show that for an isometric encoding $V_0$ mapping from a finite dimensional logical Hilbert space $\mathcal{H}_L$ to a physical Hilbert space $\mathcal{H}_P$, unitary operators $U_L$, $U_P$ on $\mathcal{H}_L$ and $\mathcal{H}_P$ respectively,
\begin{eqs}
  U_P V_0= V_0 U_L \quad \Leftrightarrow \quad V_0^\dagger  U_P V_0= U_L .
\end{eqs}
\begin{proof}
$ U_P V_0= V_0 U_L\Rightarrow V_0^\dagger  U_P V_0= U_L$ is obvious, by multiplying $V_0^\dagger $ from the left on both sides.

To show $V_0^\dagger  U_P V_0= U_L \Rightarrow U_P V_0= V_0 U_L$, we multiply $V_0$ from the left. Denoting $V_0V_0^\dagger\equiv P$, we have $P U_P V_0=V_0 U_L$. Multiply this equation by its complex conjugate from the right, we obtain $PU_PPU_P^\dagger P=P$. Note that $U_PPU_P^\dagger\equiv Q$ is also a projection operator. For any state $\ket{\psi}\in P\mathcal{H}_P$, 
\begin{eqs}
    \bra{\psi}P Q P\ket{\psi}=\bra{\psi}Q\ket{\psi}=\bra{\psi}P\ket{\psi}=\bra{\psi}\psi\rangle.
\end{eqs}
Therefore, if $\ket{\psi}\in P\mathcal{H}_P$, then $\ket{\psi}\in Q\mathcal{H}_P$. So $Q\geq P$. But $U_PPU_P^\dagger\equiv Q$. This means for a finite-dimensional logical Hilbert space, $P=Q=U_PPU_P^\dagger$. So $U_P P=P U_P$. Getting back to $P U_P V_0=V_0 U_L$, we have $P U_P V_0=U_P PV_0=U_P V_0=V_0 U_L$.

\end{proof}

\subsection{Proposition \ref{prop:logical2}}\label{sec:callogical2}

We then present the details for Proposition \ref{prop:logical2}. The calculation is overall similar to that of $F_{\mathrm{min}}$. We use Theorem~\ref{theorem:fidelitycomplementary} for the case $(n_1,n_2)=(1,1)$, $\mathcal{N}_1=\mathrm{id}_L$, $\mathcal{N}_2=\mathrm{id}_P$, $\mathcal{M}_1=\mathcal{U}_L$, $\mathcal{M}_2=\mathrm{id}_P$.

\begin{align}
&\tilde{\mathcal{E}}_c\otimes \mathrm{id}(\ket{\psi_{\rho}}\bra{\psi_{\rho}}_{SP})=\frac{\tr_{S} (V\ket{\psi_{\rho}}\bra{\psi_{\rho}}_{SP}V^{\dagger})}{\tr(V\rho V^\dagger)}\otimes\ket{0}\bra{0}_E,\\
&(\mathcal{R}^{\prime})\circ(\tilde{\mathcal{E}}\circ \mathcal{U}_L)_c\otimes \mathrm{id} (\ket{\psi_{\rho}}\bra{\psi_{\rho}}_{SP})=\frac{\tr_{S} (VU_L\ket{\psi_{\rho}}\bra{\psi_{\rho}}_{SP}U_L^\dagger V^{\dagger})}{\tr(VU_L\rho U_L^\dagger V^\dagger)}\otimes \mathcal{R}^{\prime}(\ket{0}\bra{0}_{E}).
\end{align}

Obviously, the maximization over $\mathcal{R}^{\prime}$ is achieved when $\mathcal{R}^{\prime}=I_E$. Here we choose $\ket{\psi_\rho}=\sum_i\sqrt{\rho}\ket{i_\rho}_S\ket{i_\rho}_P$ as before. So
\begin{align}
    &\tr_{S} (V\ket{\psi_{\rho}}\bra{\psi_{\rho}}_{SP}V^{\dagger})=\sum_{i,j}\bra{j_\rho}\rho^{1/2} V^\dagger V  \rho^{1/2}\ket{i_\rho}\ket{i_\rho}\bra{j_\rho}=(\rho^{1/2} V^\dagger V  \rho^{1/2})^\T,\\
    &\tr_{S} (VU_L\ket{\psi_{\rho}}\bra{\psi_{\rho}}_{SP}U_L^\dagger V^{\dagger})=\sum_{i,j}\bra{j_\rho}\rho^{1/2} U_L^\dagger V^\dagger V  U_L \rho^{1/2}\ket{i_\rho}\ket{i_\rho}\bra{j_\rho}
    =(\rho^{1/2} U_L^\dagger V^\dagger V  U_L \rho^{1/2})^\T.
\end{align}
After calculation, we obtain Eq.~\eqref{eq:fidelityUL}, by noting $\sqrt{A^\T}=(\sqrt{A})^\T$ for any positive semi-definite operator.

If we require the logical channel with respect to $\mathcal{U}_L$ to be a unitary channel $\mathcal{U}_P$ on the physical Hilbert space, we can calculate the maximal fidelity as follows.

\begin{align}\label{eq:fidelityULunitary}
\max_{\mathcal{U}_P}F_{\rho}&=\max_{U_P}\frac{|\bra{\psi_\rho}V^\dagger U_P^\dagger V U_L\ket{\psi_\rho}|}{\sqrt{\tr(\rho V^\dagger V)\tr(\rho U_L^\dagger V^\dagger V U_L)}}\\&=\max_{U_P}\frac{|\tr( \rho V^\dagger U_P^\dagger  V U_L)|}{\sqrt{\tr(\rho V^\dagger V)\tr(\rho U_L^\dagger V^\dagger V U_L)}}\\
 &=\frac{\tr\left(\sqrt{ V \rho U_L^\dagger  V^\dagger V U_L\rho V^\dagger}\right)}{\sqrt{\tr(\rho V^\dagger V)\tr(\rho U_L^\dagger V^\dagger V U_L)}}.
\end{align}
The last equality follows from the standard maximization identity $\max_U |\tr(OU)|=\tr(|O|)$, with $O= V U_L\rho V^\dagger$ in this case. Since $V U_L\rho V^\dagger$ does not have full support on the physical Hilbert space, $U_P$ is only determined on its support as the unitary of the polar decomposition of $V U_L\rho V^\dagger$. On the subspace of the kernel of $V U_L\rho V^\dagger$, $U_P$ can be chosen arbitrarily. Comparing Eq.~\eqref{eq:fidelityULunitary} with Eq.~\eqref{eq:fidelityUL}, if we denote $\rho^{1/2} U_L^\dagger V^\dagger V  U_L \rho^{1/2}=B$
 and  $ V \rho^{1/2}=A$ for the moment, we see that the result of Eq.~\eqref{eq:fidelityULunitary} is proportional to $\tr(\sqrt{ABA^\dagger})$ while Eq. \eqref{eq:fidelityUL} is proportional to $\tr(\sqrt{|A|B|A|})$. Let the singular value decomposition of $A$ be $A=S\Lambda T$, where $\Lambda$ is a diagonal matrix with positive singular values, but may be rank or column deficient.  Then $|A|=\sqrt{A^\dagger A}=T^\dagger |\Lambda| T$. Therefore,
$\tr(\sqrt{|A|B|A|})=\tr(\sqrt{|\Lambda|TBT^\dagger|\Lambda|})$, while $\tr(\sqrt{ABA^\dagger})=\tr(\sqrt{\Lambda TBT^\dagger \Lambda})$. Here $\Lambda$ and $|\Lambda|$ are both diagonal matrices with the same diagonal entries but with different shapes. A direct comparison shows that the two expressions have the same nonzero spectrum, regardless of whether $A$ is rank or column deficient. Hence the two trace quantities are equal.

Therefore, we have shown that there exists a unitary channel $\mathcal{U}_P$ that achieves the optimal channel fidelity to represent the channel $\mathcal{U}_L$.

\subsection{Proposition \ref{prop:bestworstUL}} \label{sec:calbestworstUL}

Eq.~\eqref{eq:FChoilogical} can be viewed as the fidelity between two density matrices $\rho_1=\frac{V^\dagger V}{d}$ and $\rho_2=U_L^\dagger \rho_1 U_L$. Therefore, the maximal fidelity is $1$ and is achieved if and only if $\rho_1=\rho_2$, thus $[U_L, V^\dagger V]=0$. For the minimal fidelity, we have 
\begin{equation}
    F(\rho_1,\rho_2)=\|\sqrt{\rho_1}\sqrt{\rho_2} \|_1\geq\tr(\sqrt{\rho_1}\sqrt{\rho_2})=\frac{1}{d}\tr(\sqrt{V^\dagger V}U_L^\dagger\sqrt{V^\dagger V}U_L).
\end{equation}
In the eigen-basis of $V^\dagger V$, we have 
\begin{equation}
    \frac{1}{d}\tr(\sqrt{V^\dagger V}U_L^\dagger\sqrt{V^\dagger V}U_L)=\frac{1}{d}\sum_{i,j}\sqrt{\lambda_i \lambda_j} |(U_L)_{ij}|^2\equiv \frac{1}{d}\vec{v}\cdot W\cdot \vec{v}^\T,
\end{equation}
in which $\vec{v}\equiv(\sqrt{\lambda_0},\ldots,\sqrt{\lambda_{d-1}})$ and $W$, defined via  $W_{ij}\equiv|U_L|_{ij}^2$, is a doubly stochastic matrix. Because a doubly stochastic matrix can be written as a linear combination of permutation matrices, we have $\min_W\vec{v}\cdot W\cdot \vec{v}^\T=\sum_i \sqrt{\lambda_{(i)}\lambda_{(d-1-i)}}$. The minimum is achieved if we take $U_{L,\mathrm{min}}$ to act on each eigenstate of $V^\dagger V$ as $U_{L,\mathrm{min}}\ket{v_{(j)}}=e^{i\theta_j}\ket{v_{(d-1-j)}}$, as stated in Proposition \ref{prop:bestworstUL}.

\section{Relations to other measures of non-isometry}\label{sec:comparison}

In this part, we set $\tr (V^\dagger V)=d$, which greatly simplifies the calculation. Ref.~\cite{antonini2025non} defines the following quantity to quantify the non-isometry 
\begin{align}
    D(V)&=\int_{\mathrm{Haar}} d\psi d\phi |\bra{\psi}(V^\dagger V-\mathbbm{1}_L) \ket{\phi}|^2=\frac{1}{d^2}\tr\left((V^\dagger V-\mathbbm{1}_L)^2\right)=\frac{1}{d^2}(\|V\|_4^4-d).
\end{align}
We can use the interpolation inequality of the Schatten $p$-norm
\begin{align}
    \|V\|_r^\theta \|V\|_s^{1-\theta}\geq \|V\|_p, ~\frac{1}{p}=\frac{\theta}{r}+\frac{1-\theta}{s},
\end{align}
for the case $\theta =\frac{1}{3}$, $s=4$, $p=2$,$r=1$ to get $\|V\|_1 \|V\|_4^2\geq \|V\|_2^3$. Therefore,
\begin{align}
    F_{\mathrm{Choi}}=\frac{1}{\sqrt{d}}\frac{\|V\|_1}{\|V\|_2}\geq \frac{1}{\sqrt{d}}\frac{\|V\|_2^2}{\|V\|_4^2}=\frac{1}{\sqrt{1+d D(V)}}.
\end{align}
Written in terms of $\epsilon_{\mathrm{Choi}}$, we have 
\begin{align}
    \epsilon_{\mathrm{Choi}}^2=1-F_{\mathrm{Choi}}^2\leq \frac{d D(V)}{1+d D(V)}\leq d D(V).
\end{align}

Ref.~\cite{antonini2025non} uses the following quantity to quantify the deviation from the exact state-independent reconstruction of $U_L$

\begin{align}
    &\min_{U_P} \int_{\mathrm{Haar}} d\psi \|(U_P V- V U_L) \ket{\psi}\|^2=2-\max_{U_P}\mathrm{Re}\left(\frac{2}{d}\tr(V^\dagger U_P V U_L)\right).
\end{align}
Noting that 
\begin{align}
    &\max_{U_P} \mathrm{Re}\left(\tr(V^\dagger U_P V U_L)\right)= \max_{U_P} |\tr(V^\dagger U_P V U_L)|= \tr(\sqrt{ V U_L^\dagger V^\dagger V U_L V^\dagger }),
\end{align}
we immediately see that 
\begin{align}
    &\min_{U_P} \int_{\mathrm{Haar}} d\psi \|(U_P V- V U_L) \ket{\psi}\|^2= 2-2 \max_{\mathcal{R}}F_{\mathrm{Choi}}(\mathcal{R}\circ\tilde{\mathcal{E}},\tilde{\mathcal{E}}\circ\mathcal{U}_L).
\end{align}

\end{document}